\documentclass[11pt,a4paper]{article}
\usepackage[utf8]{inputenc}
\usepackage[T1]{fontenc}
\usepackage{tgtermes}
\usepackage{amsmath,amsthm,amssymb}
\usepackage[margin=1.05in]{geometry}
\usepackage{setspace}
\usepackage{booktabs}
\usepackage{array}
\usepackage{graphicx}
\usepackage[hidelinks]{hyperref}
\theoremstyle{plain}
\newtheorem{theorem}{Theorem}
\newtheorem{proposition}{Proposition}
\newtheorem{lemma}{Lemma}
\newtheorem{corollary}{Corollary}
\theoremstyle{definition}
\newtheorem{definition}{Definition}
\theoremstyle{remark}
\newtheorem{remark}{Remark}

\newcounter{aksc}
\newenvironment{assumption}[1]{\refstepcounter{aksc}\par\medskip\noindent\textbf{Assumption A\arabic{aksc} (#1).}\itshape\ }{\par\medskip}

\newcommand{\E}{\mathbb{E}}
\newcommand{\Pj}{\Pr}

\title{\textbf{Crisis without Collapse:\\ A Theory of Capitalist Persistence}}
\author{Murat Ova\thanks{Independent researcher; murat@stochasticminds.com. I thank Georgy Lukyanov for detailed comments that substantially improved the paper. All remaining errors are my own. Comments welcome.}}
\date{August 2026}

\begin{document}
\maketitle
\vspace{-3.2em}

\begin{abstract}
\noindent The terminal crisis of capitalism has been predicted regularly since 1848. It has not occurred. We treat this 175-year forecasting record as a stylized fact in need of a model, and we supply one. The model unifies results from five literatures: adaptive systems, mechanism design and computational complexity, behavioral decision theory, global games, and the political economy of redistribution. Ten assumptions deliver twelve formal results and one theorem. Recurring shock classes have geometrically decaying collapse hazards, and the institutional response lag within the banking class alone has fallen from 816 days (1930--33) to 18 days (2008) to 2 days (2023). A centralized replacement of the price mechanism fails on at least one of two horns, computation or truthful input, and we are explicit about which horn survives modern algorithms. Revolutionary participation requires a reference-point break, and we derive Davies's J-curve from cumulative prospect theory together with a new result: the revolutionary window closes endogenously, with length inversely proportional to the speed of hedonic adaptation, roughly four to nine years at calibrated parameters, and with its perceptibility threshold derived from the participation constraint rather than assumed. Personalized media shift protest games, in the model, into the unique no-attack equilibrium of a global game. Redistribution is a dial, not an identity: countries with nearly identical market Ginis (0.505--0.512) sustain disposable Ginis from 0.273 to 0.395. We state seven testable predictions and six falsification conditions, and we map which assumption each rival school must deny.\\[4pt]
\noindent\textbf{Keywords:} institutional persistence; socialist calculation debate; prospect theory; global games; soft budget constraint; political economy of redistribution.\\
\noindent\textbf{JEL codes:} B24, C73, D02, D82, D83, D91, P16, P21.
\end{abstract}

\newpage
{\small \tableofcontents}
\newpage

\section{Introduction}\label{sec:intro}

The terminal crisis of capitalism has been announced, by our count, at least fifteen times since 1848.\footnote{The count, generous to the announcers: 1848, 1857, 1873, 1893, 1907, 1914--18, 1929--33, 1938, 1968--71, 1973--75, 1987, 1997--98, 2000, 2008, 2020, and 2023. Table~\ref{tab:record} details the ten with the largest realized disruptions; the binomial bound in Section~\ref{sec:record} uses fifteen, which is the conservative direction.} Table~\ref{tab:record} collects the major announcements and the outcomes. The pattern is awkward for the announcers. Every episode ends the same way: a recession, a reform, and a larger rulebook. A directional forecast that has failed for 175 consecutive years is not, by the standards of social science, an embarrassment to be explained away. It is a stylized fact, and stylized facts deserve models.

This paper supplies the model. We take the persistence of the capitalist system as the object to be explained, in the same spirit in which growth theory takes the Kaldor facts, and we ask what set of primitive assumptions generates it. Our answer assembles results that already exist, scattered across five literatures that rarely cite one another: the economics of adaptive systems, the mechanism-design reading of the socialist calculation debate, behavioral decision theory, the theory of coordination under incomplete information, and the political economy of redistribution. Individually, most of our propositions are known, or are near neighbors of known results. The contribution is the assembly: ten assumptions, stated once, from which the entire structure follows, together with an explicit account of where the structure can break. The paper is, by design, a synthesis with formal scaffolding rather than a unified model, and it says so where it matters (Remark~\ref{r:honest}). Its most portable product may well be Table~\ref{tab:rivals}, which converts a century-old war of quotations into a schedule of measurements, usable even by readers who reject our conclusion.

Four mechanisms carry the argument. First, the system learns. Institutional responses to recurring shock classes arrive faster and bind harder with every repetition, so the collapse hazard of any known class decays geometrically (Proposition~\ref{p:immune}). The point is not rhetorical. Within the banking-panic class alone, the lag between systemic bank failure and a system-wide deposit backstop fell from 816 days in 1930--33 to 18 days in 2008 to 2 days in 2023 (Figure~\ref{fig:record}b). Second, prices are a communication technology with no known substitute of equal bandwidth. We restate the calculation debate in its modern form, as a two-horn dilemma over computation and truthful input, and we are careful about which horn survives improvements in algorithms and hardware (Proposition~\ref{p:dilemma}). The computation horn is weaker than its popularizers believe. The input horn is not.

Third, and this is where the paper is closest to being new, the revolutionary agent is missing for reasons that can be derived rather than asserted. Under cumulative prospect theory with an adaptive reference point, participation in revolt is possible only inside windows opened by a sharp break in living standards. We derive Davies's (1962) J-curve as a proposition rather than a hypothesis, and we obtain a result we have not seen elsewhere in this form: the window closes on its own, with length
\[
k^{*} \;\approx\; \frac{1}{\gamma}\,\ln\frac{\Delta}{\varepsilon},
\]
inversely proportional to the speed $\gamma$ of hedonic adaptation, and with the threshold $\varepsilon$ derived from the participation constraint rather than assumed (Proposition~\ref{p:jwindow}). At adaptation half-lives of 1.4 to 3.1 years, the values implied by the happiness-adaptation literature, revolutionary windows last roughly four to nine years. The historical clustering of successful revolutions within half a decade of their precipitating shocks, 1789, 1917, 1979, 1989, is what the arithmetic predicts. So is the folk observation, common to every failed revolutionary generation, that the moment was missed.

Fourth, even inside an open window, mass action is a coordination game, and the equilibrium that gets played depends on the information architecture. City squares, single-channel broadcasting, and Friday prayer networks are public signals: everyone sees them and everyone sees that everyone sees them. Personalized feeds are not. In the global-games benchmark, replacing public with private signals collapses the multiplicity that revolution requires into a unique equilibrium in which attack occurs only at extreme fundamentals (Proposition~\ref{p:coord}). The atomization of attention is not a conspiracy and needs none; it is an equilibrium information structure, and it is one in which a million people can hold the same grievance separately, forever.

A fifth mechanism disciplines the other four. Distribution is a policy dial, not a system identity. In 2019 OECD data, the United States, the United Kingdom, and Finland had market-income Ginis of 0.505, 0.508, and 0.512, and disposable-income Ginis of 0.395, 0.366, and 0.273 (Figure~\ref{fig:gini}). The market near-identity is partly composition, pensions and employment structure, and we do not lean on it; the claim that carries weight is the dial. The dial has a 12-point range within a single institutional family, and the elite's equilibrium strategy is to turn it exactly as far as the revolution constraint requires (Proposition~\ref{p:dial}). Bismarck understood this in the 1880s. The formal version is Acemoglu and Robinson (2000, 2006), and the empirical version is Aidt and Jensen (2014).

We combine the mechanisms into a single persistence theorem (Theorem~\ref{t:main}) and then spend a full section on the ways it can fail. The theorem is exactly as strong as its assumptions, and we say so in the statement rather than the footnotes. The assumption we regard as the strongest empirical bet is the meta-learning property A2, the claim that institutional adaptation transfers across shock classes fast enough to keep the hazard of \emph{novel} shocks bounded. Climate tipping dynamics are the natural candidate violation, and they head our falsification list. We also flag, against our own side, the creeping generalization of too-big-to-fail as an import of Kornai's soft budget constraint into the heart of the system it originally described from outside.

\paragraph{What this paper is not.} The theorem is positive, not normative. Persistent does not mean good, and nothing below should be read as a welfare statement; Proposition~\ref{p:status} and Corollary~\ref{c:piketty} record, within the model, the distributional tensions the system persists \emph{with}. Nor is the paper a forecast that nothing can change. It is a statement of the conditions under which change of a particular kind, the permanent replacement of the price mechanism, can occur, and a demonstration that those conditions are jointly very restrictive. Nor does the paper treat capitalism and the price mechanism as synonyms; the theorem's object is the coordination core, and Remark~\ref{r:op} states exactly what is and is not claimed about ownership.

\paragraph{Related literature.} The adaptive reading of market economies goes back to Alchian (1950) and is institutionalized by Nelson and Winter (1982); the option-theoretic account of why volatility feeds the system draws on Taleb and Douady (2013), and the selection-by-crisis mechanism on Schumpeter (1942). The calculation debate begins with Mises (1920) and Hayek (1945); its mechanism-design form is Hurwicz (1972, 1977), Mount and Reiter (1974), and Jordan (1982), and its incentive form is Myerson and Satterthwaite (1983). The complexity reading owes much to Shalizi's (2012) essay, which we sharpen and, on one point, correct: modern linear-programming bounds (Cohen, Lee, and Song, 2019) weaken the naive operation-count argument, and the load-bearing wall is nonconvexity (Karp, 1972) together with the input horn. Kornai (1980, 1986) and Dewatripont and Maskin (1995) supply the soft budget constraint; Barzel (1974) prices the queue. The behavioral block is Kahneman and Tversky (1979) and Tversky and Kahneman (1992), with adaptation dynamics from the happiness literature (Brickman, Coates, and Janoff-Bulman, 1978; Clark, Frijters, and Shields, 2008; Di Tella, Haisken-De New, and MacCulloch, 2010). Prospect-theoretic accounts of rebellion exist (Berejikian, 1992; Levy, 1992); the endogenous closing of the window, to our knowledge, does not. The coordination block is Aumann (1976), Rubinstein (1989), Carlsson and van Damme (1993), Morris and Shin (1998, 2002, 2003), Hellwig (2002), Angeletos and Werning (2006), and Edmond (2013), with the sociology supplied by Granovetter (1978), Kuran (1989, 1991), Chwe (2001), and Lohmann (1994). The political block is Acemoglu and Robinson (2000, 2006), tested by Aidt and Jensen (2014), and disciplined from the left by Piketty (2014). Rival programs we engage at the assumption level include Lange (1936--37), Cockshott and Cottrell (1993), Shaikh (2016), and the degrowth school (Kallis, 2018; Hickel, 2020).

The paper proceeds as follows. Section~\ref{sec:record} documents the record. Section~\ref{sec:model} lays out the environment and the ten assumptions. Sections~\ref{sec:adaptive} through~\ref{sec:reinforce} derive the results, one mechanism per section. Section~\ref{sec:theorem} states the theorem. Section~\ref{sec:predictions} lists predictions, falsification conditions, and the map of rival positions. Section~\ref{sec:conclusion} concludes. Proofs are in Appendix~\ref{app:proofs}; data construction is in Appendix~\ref{app:data}.

\section{The Record}\label{sec:record}

\begin{table}[t]
\centering
\small
\begin{tabular}{p{1.5cm} p{4.6cm} p{4.2cm} p{3.6cm}}
\toprule
\textbf{Episode} & \textbf{Contemporary verdict} & \textbf{Peak disruption} & \textbf{Institutional residue} \\
\midrule
1873 & ``the crisis of capitalism'' & 65-month contraction & antitrust era \\
1929--33 & Comintern: the final crisis & 43 months; U.S. output falls by roughly a quarter & FDIC, Glass--Steagall, SEC \\
1938 & secular stagnation & recovery interrupted by war & Bretton Woods (1944) \\
1971 & end of the dollar order & stagflation decade & floating rates; later, inflation targeting \\
1987 & fastest crash on record & $-22.6$\% in one day & circuit breakers \\
1997--98 & global contagion & regional depressions & swap-line architecture \\
2000 & the new economy is dead & Nasdaq $-78$\% peak to trough & Sarbanes--Oxley \\
2008 & ``the end of capitalism'' & 18-month contraction & Dodd--Frank, Basel III, stress tests \\
2020 & fastest 30\% drawdown ever & 2-month contraction & pandemic facility playbook \\
2023 & ``GFC 2.0'' & three of the four largest U.S. bank failures & BTFP template \\
\bottomrule
\end{tabular}
\caption{\small Ten announced terminal crises and their outcomes. Contraction lengths follow the NBER business-cycle chronology; verdicts paraphrase contemporary commentary. Every episode ends in adaptation, none in exit. Sources and definitions: Appendix~\ref{app:data}.}
\label{tab:record}
\end{table}

Two features of the record matter for what follows, and Figure~\ref{fig:record} displays both. First, crises are not lengthening. Panel (a) plots the duration of every U.S. contraction in the NBER chronology since 1873. The three longest contractions on record are the three earliest major ones; the fitted trend is negative. The claim that each crisis is deeper than the last, which is load-bearing in the cumulative-contradiction tradition, is not visible in a century and a half of business-cycle dating.

Second, within a fixed shock class, the institutional response accelerates by orders of magnitude. Panel (b) takes the banking-panic class, the oldest and best-documented one. In the 1930--33 wave, 816 days separated the onset of systemic bank failures from a system-wide deposit backstop, and thousands of banks failed in the interval. In 2008 the analogous lag was 18 days, with the money-market guarantee arriving in 4. In 2023 it was a weekend: Silicon Valley Bank was closed on a Friday and the systemic-risk exception plus the Bank Term Funding Program were announced on Sunday evening, before Asian markets opened. No depositor lost a dollar. Same pathogen, three exposures, and a response time falling by a factor of forty at each exposure. This is what Assumption A1 will call immune memory, and it is measured, not asserted.

Three caveats keep the series honest. First, the denominator moved: deposits left Silicon Valley Bank at a pace physically impossible in 1930, so part of the fall in the lag is a race against accelerating shock propagation, and part measures the diffusion of an invention, since the instrument being timed did not exist in 1930. The discriminating observable is severity: if the series recorded fragility outrunning response, damage conditional on the shock would not fall, and it has fallen to zero, from thousands of failed banks and roughly \$1.3 billion of depositor losses in 1930--33 to none in 2023. Second, the frequency margin cuts the other way, and we do not hide it: banking-crisis incidence in advanced economies has been higher after 1973 than under Bretton Woods (Schularick and Taylor, 2012; Reinhart and Rogoff, 2009), an era whose calm was purchased by financial repression, itself a dial turned hard on finance. Lethality per hit has fallen while hits per decade have risen; the quantity that matters is the product, we have not computed it here, and its adverse movement is precisely what F5 is built to register, with A2 the exposed assumption if the rise is driven by mutated classes. Third, the NBER chronology before 1927 is retrospective Burns--Mitchell dating, not constructed comparably to the modern committee's, and the negative trend in panel (a) leans on 1873 and 1929; we therefore read the panel as the absence of a rising trend, which survives the caveat, rather than as a measured decline, which might not.

\begin{figure}[t]
\centering
\includegraphics[width=\textwidth]{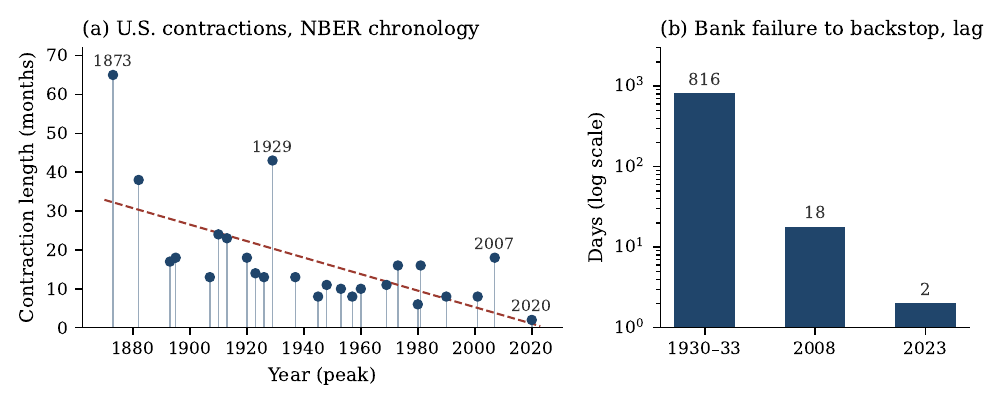}
\caption{\small \textbf{The record.} (a) Duration of U.S. business-cycle contractions, NBER chronology, 1873--2020, with linear fit (dashed) shown for orientation; the pre-1927 dating caveat of Section~\ref{sec:record} applies, and we read the panel as the absence of a rising trend rather than as a measured decline. (b) Days from onset of systemic bank failures to announcement of a system-wide deposit backstop, log scale: 1930--33 (Bank of United States failure to the national bank holiday and Emergency Banking Act), 2008 (Lehman to EESA; the money-market guarantee came in 4 days), 2023 (SVB closure to the systemic-risk exception and BTFP). Construction details in Appendix~\ref{app:data}.}
\label{fig:record}
\end{figure}

We do not claim the record proves persistence; fifteen episodes cannot prove a probability zero. A crude calculation is still instructive. If each announced terminal episode carried an independent collapse probability $q$, then observing zero collapses in fifteen trials bounds $q$ below $1-0.05^{1/15}\approx 0.18$ at the 95 percent level. Note what the bound does not say: $q$ as high as 0.18, roughly one chance in six per announced crisis, is fully consistent with this record. Read as a doom estimate rather than a reassurance, the number is livelier than the persistence rhetoric suggests, which is exactly why the theorem leans on mechanisms rather than on this count. Independence is, moreover, generous to the collapse hypothesis: under Proposition~\ref{p:immune} the trials are negatively dependent, each failure lowering the hazard of the next, so the effective bound tightens over time. Two caveats bind, and we state them rather than bury them. The sample of announcements is curated, not a census; and the negative-dependence defense leans on Proposition~\ref{p:immune}, that is, on the model whose plausibility is at issue, so the calculation illustrates magnitudes rather than supplying independent evidence. What the model itself asserts, through Proposition~\ref{p:prophecy}, is that each announcement feeds the adaptation that refutes it; on that reading the record is the mechanism's footprint, not an outside test, and that is how we use it.

\section{Environment and Assumptions}\label{sec:model}

\subsection{Primitives}

Time is discrete, $t=0,1,2,\dots$ The economy at date $t$ is a pair $E_t=(x_t,R_t)$, where $x_t\in\mathcal{X}$ is the \emph{state} (allocations, the technology stock, the production network, the price vector) and $R_t\in\mathcal{R}$ is the \emph{rulebook} (deposit insurance, capital requirements, competition law, the tax-and-transfer schedule). Shocks $\{\sigma_t\}$ arrive over time and are partitioned into \emph{classes}, $\Sigma=\biguplus_k \Sigma_k$: bank runs are one class, supply disruptions another, pandemics a third. $N_k(t)$ counts realized hits of class $k$ through date $t$.

\begin{definition}[Viability, crisis, collapse]\label{d:collapse}
Let $\mathcal{V}\subset\mathcal{X}$ be the set of states in which the production--exchange--finance loop reproduces itself, and $\mathcal{V}_\varepsilon\supset\mathcal{V}$ a stressed neighborhood. A \emph{crisis} is an excursion of $x_t$ out of $\mathcal{V}$ with finite return time. A \emph{near miss} is an excursion beyond $\mathcal{V}_\varepsilon$ that still returns. \emph{Collapse} is absorbing exit, and it comes in exactly two forms:
\begin{itemize}
\item[(E1)] \emph{Disintegration}: $x_t\notin\mathcal{V}$ for all $t\geq T$, with no functioning coordination mechanism replacing the old one.
\item[(E2)] \emph{Replacement}: the price mechanism is permanently replaced by a non-price coordination mechanism.
\end{itemize}
\end{definition}

\begin{remark}[Operationalizing the definitions]\label{r:op}
Three clarifications, all prompted by the Soviet case. First, by \emph{the price mechanism} we mean decentralized, market-clearing prices formed in exchange under binding budget constraints. Administered price tags of the Gosplan type are not prices in this sense: they cleared no markets and carried no scarcity information, which is why the Soviet core is coded here as a non-price regime despite its rubles and its price lists. Second, permanence is never observable in finite time, so E2 needs an operational horizon, and honesty requires naming the framework's nearest miss. The Soviet core ran sixty-two years from the abolition of NEP in 1929 to 1991, the longest non-price trial on record; it ended in disintegration, while the socialist economies that survived did so by reinstalling the price module (Section~\ref{sec:reinforce}). We concede without embarrassment that an observer writing in 1980 could reasonably have scored that trial as a refutation in progress; the framework survives it on the record's eventual verdict, not on evidence available at the time. For falsification going forward we therefore fix the horizon in advance: a non-price replacement that runs a century without reinstalling prices and without disintegrating counts against Theorem~\ref{t:main}(i). The century is admittedly post hoc relative to the Soviet trial; pre-registering it now binds us for the next one. Third, the title says capitalism and the theorem says price mechanism, and the two are not synonyms. Ownership arrangements lie outside the theorem's object: Lange-type market socialism, cooperative ownership, and state-heavy mixed economies all retain the price core and are therefore not E2 events, so the theorem is silent about them, and the China case supports the price-module claim rather than any claim about private ownership. We keep the word capitalism because Table~\ref{tab:record} is the record of the historical family organized around the price core; readers who mentally retitle the paper ``The Persistence of the Price Mechanism'' are reading the theorem exactly correctly.
\end{remark}

Every episode in Table~\ref{tab:record} is a crisis or a near miss under this definition; the set of realized E1 or E2 events in the observation window is empty. After each shock the rulebook updates through an \emph{adaptation operator}, $R_{t+1}=A(R_t,\sigma_t,y_t)$, where $y_t$ is observed damage. Let $q_k^{(m)}$ denote the probability that the $m$-th hit of class $k$ causes collapse given the rulebook in force at the hit, and $p_k^{(m)}$ the corresponding near-miss probability. Finally, a path $x_0\to x_1\to\dots\to x_T$ is \emph{feasible} if every intermediate state lies in $\mathcal{V}$ and every step is a local module change. Heart surgery on this patient is performed while the patient runs.

\subsection{Assumptions}

We group the assumptions in four blocks: structural (A1--A3), informational (A4--A5), behavioral (A6--A7), and political (A8--A10). Each carries its empirical anchor in brackets. A2 is the strongest empirical bet in the system, and we flag it as such. A note on labels before the results: propositions whose statements carry citations are imports from the literature or direct applications of it, numbered as propositions for cross-reference convenience; the paper claims originality only for the threshold and closing results of Proposition~\ref{p:jwindow}, the open-window share, the assembly itself, and the falsification mapping.

\begin{assumption}{Immune memory}\label{a:immune}
There exists $\delta\in(0,1)$ such that for every class $k$ and hit count $m$, $q_k^{(m+1)}\leq\delta\,q_k^{(m)}$ and $p_k^{(m+1)}\leq\delta\,p_k^{(m)}$. {\upshape [Anchor: Figure~\ref{fig:record}b. The banking class went from thousands of failures and large depositor losses in 1930--33 to three large failures and zero depositor losses in 2023.]}
\end{assumption}

\begin{assumption}{Bounded novelty and meta-learning}\label{a:novelty}
First-hit hazards of novel classes are uniformly bounded, $\sup_k q_k^{(1)}\leq\bar q<1$, and adaptation transfers across classes: the bound $\bar q_t$ is non-increasing in $t$. {\upshape [Anchor: the 2020 pandemic was a novel class, and the 2008 toolkit, swap lines, asset purchases, rapid fiscal support, was redeployed against it within weeks. This is the assumption climate dynamics will test; see F1.]}
\end{assumption}

\begin{assumption}{Option structure}\label{a:option}
The system comprises $n$ separable units with limited liability and first-order shock exposure, $V_i(x+\epsilon)\approx V_i(x)+\beta_i^{\top}\epsilon$, and aggregates as $F(x)=\sum_i \max\{V_i(x),0\}$. {\upshape [Anchor: bankruptcy law; Schumpeter (1942). The 2000 crash destroyed the dot-com sector and bequeathed its fiber and engineers to the next round.]}
\end{assumption}

\begin{assumption}{Information primitives}\label{a:info}
Production and consumption knowledge is (i) dispersed, $I=\bigcup_j I_j$ with no $j$ holding $I$; (ii) partly tacit, readable only from choice under a binding budget constraint, not from reports; (iii) local and time-varying. {\upshape [Anchor: Hayek (1945); the systematic gap between survey statements and till receipts.]}
\end{assumption}

\begin{assumption}{Structure of the planning problem}\label{a:plan}
The centralized allocation problem has $N$ variables, with $N$ scaling as (commodities)$\times$(locations)$\times$(dates)$\times$(contingencies) and $N\geq 10^{10}$ in realistic economies; it is generically nonconvex, because of indivisibilities, fixed costs, and increasing returns; and its coefficient data go stale over a horizon $\tau_{s}$. {\upshape [Anchor: Gosplan planned over roughly twelve million line items annually (Nove, 1983); scale economies are the defining feature of the industries that matter.]}
\end{assumption}

\begin{assumption}{Behavioral core}\label{a:cpt}
Agents code outcomes relative to a reference point and evaluate them with the cumulative-prospect value function
\[
v(z)=\begin{cases} z^{\alpha}, & z\geq 0,\\ -\lambda(-z)^{\beta}, & z<0,\end{cases}
\qquad \alpha\approx\beta\approx 0.88,\quad \lambda\approx 2.25,
\]
overweight small probabilities, $w(p)>p$ for small $p$, and discount quasi-hyperbolically. {\upshape [Anchor: Kahneman and Tversky (1979); Tversky and Kahneman (1992); Thaler (1981); Laibson (1997). Risk seeking over losses holds for moderate-to-high probabilities, the loss branch of the fourfold pattern; Proposition~\ref{p:jwindow} uses the curvature of that branch, diminishing sensitivity, rather than the risk attitude itself.]}
\end{assumption}

\begin{assumption}{Adaptive reference}\label{a:ref}
The reference point tracks realized consumption geometrically: $r_{t+1}=(1-\gamma)r_t+\gamma c_t$, $\gamma\in(0,1)$. {\upshape [Anchor: hedonic adaptation; Brickman, Coates, and Janoff-Bulman (1978); Clark, Frijters, and Shields (2008). Panel estimates imply most of the well-being effect of an income change dissipates within roughly four years (Di Tella, Haisken-De New, and MacCulloch, 2010), i.e.\ $\gamma$ in the range 0.2--0.4.]}
\end{assumption}

\begin{assumption}{Information structure of the regime-change game}\label{a:regime}
Revolution is a threshold game with strategic complementarities: the regime falls if the attacking mass exceeds $\theta(f)$ at fundamental $f$. Signals are of two kinds, \emph{public} (observed by all, and observed by all to be observed by all) and \emph{private} (noisy, with privately known precision). In the personalized-media regime there is no public signal correlated with aggregate discontent. {\upshape [Anchor: the passage from the city square and single-channel broadcasting to the individually curated feed; Chwe (2001).]}
\end{assumption}

\begin{assumption}{Dependency network}\label{a:network}
The state $x$ is a densely connected network of modules; global reconfiguration is possible only along feasible paths in the sense above, and halting the system for a rebuild is immediate exit from $\mathcal{V}$. {\upshape [Anchor: a single canal blockage in 2021 held up roughly \$9.6 billion of trade flow per day; advanced logic capacity is concentrated on one island.]}
\end{assumption}

\begin{assumption}{Control loop}\label{a:dial}
The distribution parameter $\theta\in\Theta$ (the intensity of taxes and transfers) is adjustable by the political system, and the system's identity does not depend on its value: the market economy remains in $\mathcal{V}$ over a wide range of $\theta$. {\upshape [Anchor: Figure~\ref{fig:gini}; Bismarck's social-insurance package of the 1880s.]}
\end{assumption}

\section{The Adaptive Layer}\label{sec:adaptive}

\begin{proposition}[Immune memory]\label{p:immune}
Under A1: (i) for every known class $k$, $\sum_m q_k^{(m)}\leq q_k^{(1)}/(1-\delta)<\infty$; (ii) by countable subadditivity, $\Pj[\text{collapse from known classes}]\leq\sum_{k\in K} q_k^{(1)}/(1-\delta)$; (iii) by the first Borel--Cantelli lemma, each class generates almost surely only finitely many near misses.
\end{proposition}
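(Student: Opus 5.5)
Part (i) follows by induction on $m$ from the recursion in A1. From $q_k^{(m+1)}\le\delta q_k^{(m)}$ we get $q_k^{(m)}\le\delta^{m-1}q_k^{(1)}$ for every $m\ge1$. Summing the geometric series with $\delta\in(0,1)$ gives
\[
\sum_{m\ge1}q_k^{(m)}\;\le\;q_k^{(1)}\sum_{m\ge0}\delta^m\;=\;\frac{q_k^{(1)}}{1-\delta},
\]
which is finite because $q_k^{(1)}\le1$. The same induction gives $p_k^{(m)}\le\delta^{m-1}p_k^{(1)}$ and hence $\sum_m p_k^{(m)}\le p_k^{(1)}/(1-\delta)<\infty$, which is what (iii) needs.

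For (ii), write the event ``collapse from known classes'' as a union $\bigcup_{k\in K}\bigcup_{m\ge1}C_k^{(m)}$. Here $C_k^{(m)}$ is the event that the $m$-th hit of class $k$ occurs and causes collapse. The step needing care is that $q_k^{(m)}$ is defined as a \emph{conditional} probability, given the rulebook in force at the hit. I would bound $\Pj[C_k^{(m)}]$ by conditioning on the history up to the $m$-th hit, or on the event that the hit occurs at all. Under A1, read uniformly over rulebooks reachable at that hit, this gives
\[
\Pj[C_k^{(m)}]\;\le\;\E\bigl[\Pj[C_k^{(m)}\mid\text{history}]\bigr]\;\le\;q_k^{(m)}.
\]
Countable subadditivity over the countable index set $K\times\mathbb{N}$ then yields $\sum_k\sum_m q_k^{(m)}$, and part (i) bounds this by $\sum_{k\in K}q_k^{(1)}/(1-\delta)$. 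The bound is informative only when this sum is below one; the statement does not claim that, and I would remark on it rather than prove it.

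For (iii), fix $k$ and let $B_m$ be the event that the $m$-th hit of class $k$ is a near miss. The same conditioning argument gives $\Pj[B_m]\le p_k^{(m)}$, so by part (i) for the $p$'s, $\sum_m\Pj[B_m]<\infty$. The first Borel--Cantelli lemma then gives $\Pj[B_m\text{ i.o.}]=0$, so only finitely many near misses occur almost surely. If class $k$ is hit only finitely often, the claim is trivial. This part needs no independence.

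The main obstacle is making the conditional-to-unconditional step rigorous. The paper defines $q_k^{(m)}$ ``given the rulebook in force'' and treats it as a number rather than a random variable. I would handle this by reading A1 as giving deterministic upper bounds valid for every reachable rulebook. If that reading fails, I would instead work with $\E[q_k^{(m)}]$, for which the same geometric recursion and tower property go through.
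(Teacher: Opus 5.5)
Your proposal is correct and follows the paper's proof: iterate A1 and sum the geometric series for (i), apply countable subadditivity to the single-hit collapse events $C_k^{(m)}$ for (ii), and apply the first Borel--Cantelli lemma to the summable near-miss hazards for (iii). Your one addition is to pass from the rulebook-conditional hazards $q_k^{(m)}, p_k^{(m)}$ to unconditional bounds such as $\Pj[C_k^{(m)}]\le q_k^{(m)}$ via the tower property, which makes explicit a step the paper uses without comment.
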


The union bound in (ii) is the right tool because collapse is absorbing; Borel--Cantelli, which requires no independence in its first half, is the right tool for the recurring near misses. The economics is the essay's phrase made precise: the same virus cannot make the system equally sick twice. What decays is not the arrival rate of shocks but their conditional lethality.

\begin{proposition}[Antifragility of the aggregate]\label{p:anti}
Under A3, $F$ is convex, and for any zero-mean shock $\epsilon$, $\E[F(x+\epsilon)]\geq F(x)$, strictly if the shock distribution puts mass on both sides of at least one unit's kink. The aggregate gains from volatility that destroys its parts.
\end{proposition}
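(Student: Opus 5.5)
The argument turns on one modeling choice: under A3 we take the local linearization $V_i(x+\epsilon)=V_i(x)+\beta_i^{\top}\epsilon$ as exact on the relevant neighborhood. With that, each unit's value is an affine function of the shock, and the claim reduces to the convexity of $u\mapsto\max\{u,0\}$, followed by Jensen's inequality. The plan has three steps.

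\emph{Convexity.} For each $i$, the map $\epsilon\mapsto V_i(x)+\beta_i^{\top}\epsilon$ is affine, and $u\mapsto\max\{u,0\}$ is convex and non-decreasing. A convex function composed with an affine map is convex, so each summand $\epsilon\mapsto\max\{V_i(x+\epsilon),0\}$ is convex. A finite sum of convex functions is convex, so $F(x+\cdot)$ is convex in the shock. Equivalently, $F$ is convex in $x$ on the neighborhood where the linearization holds. If one wants convexity globally in $x$ and not only in $\epsilon$, one needs each $V_i$ to be affine there as well. I would state the result as convexity in the shock direction, since that is all the expectation claim uses.

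\emph{Jensen.} Let $\epsilon$ have mean zero and finite first moment. Then $\E[F(x+\epsilon)]$ is finite, because each summand is bounded by $|V_i(x)|+\|\beta_i\|\,\|\epsilon\|$. Jensen's inequality for the convex function $g(\epsilon)=F(x+\epsilon)$ gives $\E[g(\epsilon)]\geq g(\E\epsilon)=g(0)=F(x)$. A more elementary route, which also sets up strictness, works unit by unit. Write $a_i=V_i(x)$ and $Z_i=\beta_i^{\top}\epsilon$, so $\E Z_i=0$. Then
\[
\E\max\{a_i+Z_i,0\}\geq \max\{\E(a_i+Z_i),0\}=\max\{a_i,0\}.
\]
Summing over $i$ gives the inequality.

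\emph{Strictness.} This is the only delicate step, because the hypothesis must be read correctly. ``Mass on both sides of unit $i$'s kink'' should mean that $a_i+Z_i$ takes values on both sides of $0$ with positive probability: $\Pr[a_i+Z_i>0]>0$ and $\Pr[a_i+Z_i<0]>0$. Jensen is strict for $\max\{\cdot,0\}$ exactly when the argument is not almost surely confined to one linear piece. To show this, use the identity $\max\{u,0\}=u+\max\{-u,0\}$, which gives
\[
\E\max\{a_i+Z_i,0\}-\max\{a_i,0\}=\E\max\{-(a_i+Z_i),0\}
\]
when $a_i\geq 0$. The right-hand side is strictly positive because $a_i+Z_i<0$ with positive probability. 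When $a_i<0$, the gain is $\E\max\{a_i+Z_i,0\}$ directly, which is strictly positive because $a_i+Z_i>0$ with positive probability. The other units contribute non-negative gaps, so the total inequality is strict.

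The obstacle I expect is not the analysis but the status of the approximation sign in A3. If $V_i$ is only first-order accurate, the $o(\|\epsilon\|)$ remainders could in principle swamp the Jensen gap for small shocks. I would therefore state the proposition under exact local linearity, or for shocks supported in the neighborhood where it holds. I would also add a remark that, with smooth $V_i$, the conclusion survives whenever the kink-crossing gain, which is of order $\E|Z_i|$, dominates the curvature of the $V_i$.
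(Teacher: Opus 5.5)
Your proof is correct and follows the paper's own argument: take A3's linearization as exact, so each summand is $\max\{\cdot,0\}$ composed with an affine map of $\epsilon$; sum the convex summands; and apply Jensen at $\E\epsilon=0$. Your unit-by-unit strictness computation via $\max\{u,0\}=u+\max\{-u,0\}$ makes the paper's one-line appeal to ``strict convexity of $\max\{\cdot,0\}$'' near a kink precise. A piecewise-linear function is not strictly convex, so reading the hypothesis as $\Pr[a_i+Z_i>0]>0$ and $\Pr[a_i+Z_i<0]>0$, as you do, is the right way to state that step.
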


The mechanism is Jensen's inequality applied to a portfolio of limited-liability claims, the convexity definition of antifragility in Taleb and Douady (2013), and the selection reading is Alchian (1950): volatility is the filter that reallocates share from maladapted units to adapted ones. Nothing here requires anyone to intend it.

\begin{proposition}[Prophecy inversion]\label{p:prophecy}
Let the public collapse belief be $q$ and the realized hazard $q'=g(q)$. If institutional response makes $g$ continuous and strictly decreasing, then $g$ has a unique fixed point $q^{*}$; if in addition $\sup|g'|<1$, the belief dynamic $q_{n+1}=g(q_n)$ converges to it. Beliefs about collapse are then self-defeating rather than self-fulfilling. (Without the contraction bound a decreasing map can settle into a two-cycle: uniqueness of the fixed point is general, convergence is not.)
\end{proposition}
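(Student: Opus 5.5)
The plan is to work on the domain of beliefs $[0,1]$ and treat $g:[0,1]\to[0,1]$ as a continuous, strictly decreasing self-map. This assumes the realized hazard is a probability, which the definition of $q'$ guarantees. The proof has three steps, in order: existence, uniqueness, convergence.

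\emph{Existence.} Set $h(q)=g(q)-q$. Since $g$ maps into $[0,1]$, we have $h(0)=g(0)\geq 0$ and $h(1)=g(1)-1\leq 0$. Because $h$ is continuous, the intermediate value theorem gives some $q^{*}$ with $h(q^{*})=0$.

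\emph{Uniqueness.} $h$ is the sum of a strictly decreasing function $g$ and the strictly decreasing function $-q$, so $h$ is strictly decreasing. It therefore has at most one zero, and the fixed point $q^{*}$ is unique. This step needs only monotonicity, which is the sense in which the statement calls uniqueness general.

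\emph{Convergence.} Under $\sup|g'|<1$, I would first note that $g$ must be differentiable (or at least Lipschitz) for the hypothesis to make sense. The mean value theorem then gives $|g(a)-g(b)|\leq L|a-b|$ with $L=\sup|g'|<1$. So $g$ is a contraction on the complete space $[0,1]$. The Banach fixed-point theorem then gives geometric convergence $|q_n-q^{*}|\leq L^n|q_0-q^{*}|$ from any initial belief. As a consistency check, this recovers the uniqueness already proved.

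The interpretive sentence, that beliefs are self-defeating, follows from the sign of $g'$. Because $g$ is decreasing, any belief above $q^{*}$ produces a realized hazard below $q^{*}$, and vice versa. Iterates therefore alternate around $q^{*}$, the opposite of the self-fulfilling, increasing case.

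\emph{Counterexample for the parenthetical.} I would exhibit a decreasing map with a two-cycle. The simplest choice is $g(q)=1-q$: every pair $(q_0,1-q_0)$ with $q_0\neq\tfrac12$ is a 2-cycle, and the unique fixed point is $\tfrac12$. This is the boundary case $|g'|=1$. A smooth strictly decreasing map with $|g'|>1$ near the fixed point would give an attracting 2-cycle, but the linear example suffices to show that convergence can fail.

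I expect the main obstacle to be the domain and regularity bookkeeping rather than the mathematics: the statement never specifies that $g$ maps $[0,1]$ into itself or that it is differentiable. I would state both explicitly as standing conventions, justified by $q'$ being a probability. I would also remark that Lipschitz continuity with constant below one is all the convergence step actually uses.
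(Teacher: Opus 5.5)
Your proof is correct and is essentially the paper's own argument: $h(q)=g(q)-q$ with the intermediate value theorem and strict monotonicity give the unique fixed point, the mean value theorem makes $g$ a contraction, and Banach's fixed-point theorem gives convergence, with the cobweb two-cycle showing that convergence can fail without the bound. Your explicit example $g(q)=1-q$, and your stated conventions that $g$ maps $[0,1]$ into itself and is differentiable (or Lipschitz), make precise what the paper's proof leaves implicit.
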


Merton (1948) described the increasing case. Institutional design is the engineering of the sign flip, and the canonical worked example is Diamond and Dybvig (1983): the bank run is a coordination game with two equilibria, and deposit insurance deletes the bad one from the game tree. Two consequences deserve emphasis. First, the loudest doomsayers are, in this mechanism, unpaid sensors: every widely circulated collapse scenario raises $q$, which triggers the response that lowers $q'$. The system's most faithful debuggers are its critics; the mirror-image worry, that sustained success erodes the very supply of credible critics, is developed in Lukyanov (2026) and becomes our sixth falsification condition. Second, the mechanism explains why the forecasting record of Table~\ref{tab:record} \emph{worsens over time for the forecasters}: each announcement contributes to its own refutation.

\section{The Informational Layer}\label{sec:info}

\begin{lemma}[Informational minimality of prices]\label{l:prices}
In classical exchange environments (convex preferences and technology, no externalities), within the class of regular privacy-preserving mechanisms realizing Pareto-optimal and individually rational allocations, the competitive price mechanism uses a message space of minimal dimension, and it is the unique such mechanism (Mount and Reiter, 1974; Hurwicz, 1977; Jordan, 1982).
\end{lemma}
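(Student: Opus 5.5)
The plan is to reproduce the Mount--Reiter--Hurwicz argument in its standard two-part form: an upper bound exhibiting the competitive mechanism's message space, and a lower bound showing that no regular privacy-preserving mechanism realizing the Pareto-optimal, individually rational correspondence can use fewer dimensions. Uniqueness, following Jordan (1982), comes last. Throughout, we work with $n$ agents and $\ell$ goods, where preferences are convex and technology is convex with no externalities. The smoothness conditions are those under which "regular" (locally threaded, Lipschitz message correspondences) is meaningful.

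\textbf{Step 1 (upper bound).} We construct the Walrasian mechanism as a verification scenario. A message consists of a normalized price vector $p\in\Delta^{\ell-1}$ together with net trades $z_1,\dots,z_{n-1}$, with $z_n$ pinned down by market clearing $\sum_i z_i=0$. Budget balance $p\cdot z_i=0$ removes one more coordinate per agent. The total dimension is therefore $(\ell-1)+(n-1)(\ell-1)=n(\ell-1)$. Each agent verifies privately that $z_i$ maximizes her preference on her budget set, which gives privacy preservation. By the first welfare theorem, the outcome is Pareto-optimal, and it is individually rational because the endowment is affordable.

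\textbf{Step 2 (lower bound).} This is the crux, and I expect it to be the main obstacle. Following Mount--Reiter and Hurwicz, we restrict attention to a test class of environments: quadratic or Cobb--Douglas utilities parametrized so that each agent's marginal-rate-of-substitution data at the optimum span $\ell-1$ dimensions. On this class we apply the fooling-set (uniqueness) argument. Suppose two environment profiles $e=(e_1,\dots,e_n)$ and $e'$ with different Pareto-optimal IR outcomes were sent to the same equilibrium message. Privacy preservation then implies that the crossed profile $(e_1,e_2',\dots)$ also accepts that message. Convexity lets us choose the test class so that the crossed profile's acceptable outcomes contradict optimality. It follows that the equilibrium message correspondence restricted to the test class is injective. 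Since the test class is an $n(\ell-1)$-dimensional manifold, regularity (Lipschitz continuity rules out Peano-type dimension-compressing encodings) together with invariance of domain gives $\dim M\ge n(\ell-1)$. The delicate point is choosing the test family so that Pareto optimality and individual rationality jointly pin down both the price direction and each agent's trade. I would try first a family of Cobb--Douglas utilities with varying exponents and fixed endowments, computed so that the injectivity argument holds.

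\textbf{Step 3 (uniqueness).} Following Jordan (1982), any mechanism attaining the minimal dimension must, locally, have message-space coordinates in bijection with the test-class parameters. A minimal mechanism's verification equations must also coincide with first-order conditions that are common across agents. Those common first-order conditions are exactly supporting prices. Hence any minimal mechanism is locally equivalent, up to a diffeomorphism of messages, to the Walrasian one. Since the lemma is explicitly an import, the writeup would present Steps 1--2 in sketch form and cite Jordan for the equivalence in Step 3 rather than reprove it.
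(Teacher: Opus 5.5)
The paper gives no proof of this lemma. It imports the result by citation and only stresses that convexity and the absence of externalities are the scope conditions. Your reconstruction is the standard architecture: a Walrasian upper bound, a fooling-set lower bound on a test class with invariance of domain, and Jordan for uniqueness. It can be completed. What it buys over the citation is visibility of which hypothesis does which job, but as written it assigns those jobs wrongly in several places.

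\emph{Step~1.} You never show that the competitive mechanism \emph{realizes} anything. Realization requires $\mu(e)\neq\emptyset$ for every environment in the class, i.e.\ existence of Walrasian equilibrium, and that is where convexity is used. The first welfare theorem and the affordability of the endowment only give $h(\mu(e))\subset F(e)$.

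\emph{Step~2.} Conversely, convexity is not what makes the crossed profile fail. What works is differentiability (at an interior Pareto optimum all gradients are collinear) together with individual rationality, which with interior endowments keeps outcomes interior. Your Cobb--Douglas class then closes the step in one line. For $u_i=\sum_j a_{ij}\ln x_{ij}$ with $\sum_j a_{ij}=1$, an interior $x_i$ supported by $p$ forces $a_{ij}=p_jx_{ij}/(p\cdot x_i)$. The hypothesis should be that distinct test environments $e\neq e'$ share a message $m$, not that they have ``different outcomes'', since a shared message means the shared outcome $h(m)$. Then there are two cases. If their supporting prices at $h(m)$ coincide, $e=e'$. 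If they differ, the crossing $(e_1,e_2',e_3,\dots,e_n)$, which stays in the product test class, has non-collinear gradients at $h(m)$, contradicting Pareto optimality there. IR is indispensable here. On your fixed-endowment class, Pareto optimality alone admits the constant selection ``give everything to agent~1'', so injectivity can only come from IR.

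\emph{Step~3.} The narrative has the same blind spot. ``Verification equations are common first-order conditions, i.e.\ supporting prices'' describes equally well the competitive mechanisms with a fixed lump-sum transfer rule $p\cdot z_i=t_i(p)$, $\sum_i t_i=0$. These have the same dimension $n(\ell-1)$ and Pareto-optimal equilibrium outcomes. What excludes them is individual rationality. The characteristics compatible with a message $(p,z)$ include every convex preference supported by $p$ at $x_i$, nearly linear ones among them. IR for all of these requires $p\cdot z_i\geq 0$ for each $i$, and $\sum_i z_i=0$ then forces $p\cdot z_i=0$, i.e.\ $t\equiv 0$, the Walrasian budget. Since you defer to Jordan (1982) for the equivalence, your conclusion stands, but the sketch should put the weight on IR rather than on the first-order conditions.

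This is the real dividend of reconstructing rather than citing. It shows that convexity is needed for the upper bound (existence), while the lower bound and uniqueness run on smoothness and IR.
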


We import the lemma rather than prove it, and the scope conditions matter: convexity and no externalities. Its role here is narrow but structural. The familiar metaphor of the price as a compressed telegram, scarcity plus demand plus expectation in one number, is not a metaphor. It is a dimension-counting theorem. Where externalities are present the minimality fails, which is precisely the gap Section~\ref{sec:reinforce} works through for carbon.

\begin{proposition}[The planning dilemma]\label{p:dilemma}
Under A4 and A5, any centralized replacement of the price mechanism faces at least one of two horns at all times:
\begin{itemize}
\item[\emph{H1}] (\emph{Computation.}) The realistic planning problem cannot be solved exactly within the staleness horizon $\tau_{s}$ under known algorithms and standard complexity assumptions ($\mathrm{P}\neq\mathrm{NP}$).
\item[\emph{H2}] (\emph{Input.}) The objective function's inputs, true preferences and costs, cannot be elicited without binding budget constraints, that is, without prices.
\end{itemize}
\end{proposition}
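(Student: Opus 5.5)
The plan is to prove the dilemma as a disjunction by cases on the planner's \emph{information regime}: either the planner elicits the data through some channel other than budget-constrained choice, or it obtains them through such choice. In each case I will show that one horn binds. The statement says ``at least one horn at all times.'' So it suffices to show that whenever H2 fails, that is, whenever the planner somehow has the true objective and constraint data, H1 holds. By itself this would be a one-horn-conditional argument. The logical form I will prove is therefore $\neg\text{H2}\Rightarrow\text{H1}$, which is equivalent to $\text{H1}\vee\text{H2}$.

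The first step is the input horn, derived from A4. By A4(i), no single agent holds $I$, so the planner must aggregate reports from the $j$'s. By A4(ii), the tacit part of $I_j$ is readable only from choice under a binding budget constraint. Any elicitation channel that does not impose such a constraint therefore recovers at most the codified part of $I_j$. Incentive compatibility gives a second, independent route to the same conclusion. Without budget-constrained transfers, truthful revelation of private costs and valuations is not generally implementable with efficiency, individual rationality and balance; here I would invoke Myerson and Satterthwaite (1983) as an import. A mechanism that does impose binding budget constraints on choice is, by Remark~\ref{r:op}'s definition, a price mechanism, and Lemma~\ref{l:prices} identifies the competitive mechanism as the minimal regular one in the classical case. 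A ``centralized replacement'' that avoids prices therefore falls on H2. This step is essentially definitional plus an import, and I expect it to be routine.

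The second step handles the residual case, in which the planner is granted the inputs, and here I will use A5. A5 gives $N\geq 10^{10}$ and generic nonconvexity from indivisibilities and fixed costs. The argument reduces a known NP-hard problem, integer programming or fixed-charge network design (Karp, 1972), to a sub-instance of the planning problem, by encoding indivisible plants as binary variables. Exact solution in time polynomial in $N$ would then imply $\mathrm{P}=\mathrm{NP}$. Under $\mathrm{P}\neq\mathrm{NP}$, no known algorithm solves worst-case instances at this scale within $\tau_s$, and this is H1. I will state explicitly that the convex relaxation is \emph{not} the load-bearing part. Interior-point LP bounds (Cohen, Lee, and Song, 2019) make an LP of size $10^{10}$ arguably tractable, so H1 must rest on nonconvexity, not on operation counts.

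The main obstacle is the gap between worst-case hardness and ``the realistic planning problem.'' NP-hardness concerns worst-case families, while a planner faces specific instances and could accept approximations. I would handle this in three moves. First, I would read H1 literally as \emph{exact} solution, as stated. Second, I would argue that the reduction embeds hard instances naturally, since fixed-charge structure is generic by A5. Third, I would note that A4(iii), time-varying data with staleness $\tau_s$, forces re-solving repeatedly, so even average-case difficulty compounds. I would flag the residual gap candidly as the reason the paper calls H2 the surviving horn.
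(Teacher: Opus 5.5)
Your proposal is correct and matches the paper's proof in substance. H2 comes from A4(ii), tacit knowledge readable only under a binding budget constraint, backed by the mechanism-design imports; H1 comes from A5's nonconvexity via the NP-completeness of integer programming (Karp, 1972), after conceding the interior-point LP bounds. Since each horn is in fact established unconditionally, your $\neg\mathrm{H2}\Rightarrow\mathrm{H1}$ case split is only packaging.

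One difference is worth noting. The paper's incentive step for H2 is a cheap-talk dichotomy: costless reports collapse to babbling, and costly reports reinvent a price. That dichotomy is what separates price from non-price elicitation. Myerson--Satterthwaite, which you use for that role, also binds mechanisms with transfers, so it cannot make the distinction on its own.
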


H2 carries the argument; H1 we state mainly to concede its popular version, and the concession is worth making precisely.

First, then, H1. The classical interior-point bound for linear programs is $O(N^{3.5})$ operations; the best known theoretical bound is now roughly $O(N^{2.37})$ (Cohen, Lee, and Song, 2019), and Figure~\ref{fig:comp} plots both at exascale. At $N=10^{10}$ the optimistic exponent gives about six days per deterministic solve: painful, not impossible. What remains after the concession is not an exponent. Contingent commodities multiply $N$ combinatorially in the number of distinguishable states, and A5's nonconvexities, the fixed costs and increasing returns of exactly the industries a plan would most need to get right, take the problem out of the convex class altogether: integer programming is NP-complete (Karp, 1972).

Second, honesty requires the converse observation. Markets do not solve this optimization either. They do not need to. The market implements a massively parallel, locally informed, adaptive approximation in which no node ever holds the objective function. The correct comparison is not optimal plan versus market; it is uncomputable optimum versus computable approximation. H1 does not say the plan would be worse than the market at the market's own game. It says the game the plan proposes to play is not defined.

Third, H2 is the binding horn, and it survives arbitrary hardware. With costless reports, the mechanism is a cheap-talk game, and when allocations respond to reports, exaggeration is a best reply: informative separation collapses into babbling. Attaching a cost to reports reinvents the price. The impossibility anchors are general. In classical environments no individually rational, Pareto-efficient mechanism is dominant-strategy incentive compatible (Hurwicz, 1972), and even in the minimal case of bilateral trade, incentive compatibility, individual rationality, budget balance, and full efficiency cannot coexist (Myerson and Satterthwaite, 1983). A4(ii) deepens the point: the tacit component is information the agent herself does not possess before choosing under a binding constraint. A survey is the wrong instrument by construction, not by execution.

\begin{remark}[The AI planner and self-reference]\label{r:ai}
The objection that sufficiently capable machine intelligence dissolves H1 runs into H2. A planning model's training distribution is the record of choices under binding budget constraints, that is, of market interactions. Retiring the price mechanism switches off the process that generates the model's own inputs; the system becomes a loop consuming its own tail. We state this as an argument, not a theorem, and we give it a falsification condition (F2): a demonstrated preference-revelation technology that works without binding constraints would defeat it, and would be, in our view, the single most important discovery a planning research program could make. Solving H1 alone would not.
\end{remark}

\begin{figure}[t]
\centering
\includegraphics[width=0.72\textwidth]{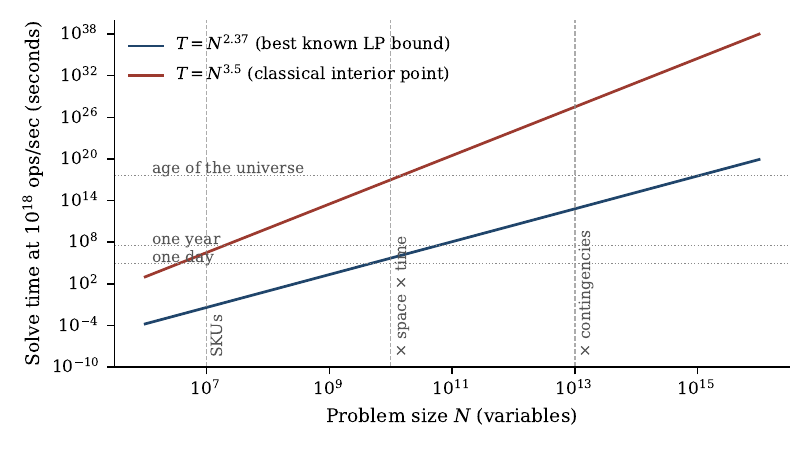}
\caption{\small \textbf{The computation horn, honestly drawn.} Solve time for a single deterministic program at $10^{18}$ operations per second, under the classical interior-point exponent and the best known theoretical bound. Vertical lines mark the growth of $N$ from bare commodity counts through space, time, and contingencies. The chart concedes the modern exponent; the binding constraints are nonconvexity (NP-completeness), the contingency dimension, and data staleness, none of which appear on the axis and all of which move the wall left.}
\label{fig:comp}
\end{figure}

\begin{corollary}[Goodhart--Lucas]\label{c:goodhart}
If the planner rewards an observable proxy $s$ linearly while true value has an unrewarded component, the optimum is a corner: all effort flows to the proxy. A measure that becomes a target ceases to measure.
\end{corollary}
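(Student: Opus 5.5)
To prove Corollary~\ref{c:goodhart}, I would set up a minimal principal--agent model and show that the corner is forced by linearity together with the fact that effort spent on one activity is unavailable for the other.

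\textbf{Setup.} The agent chooses an effort split $(e_s,e_u)\in\mathbb{R}_{+}^2$ subject to a budget $e_s+e_u\le \bar e$. The budget stands in for fixed time or attention, or equivalently a cost $c(e_s+e_u)$ that is increasing in total effort. Effort $e_s$ raises the observable proxy $s=e_s$, while $e_u$ raises only the unrewarded component of true value $V=a\,e_s+b\,e_u$ with $b>0$. The planner pays $w(s)=\omega_0+\omega s$ with $\omega>0$, as the statement's linear-reward hypothesis requires.

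\textbf{Step 1: the agent's problem is a linear program.} The agent maximizes $\omega_0+\omega e_s-c(e_s+e_u)$. For any fixed total $E=e_s+e_u$, the cost term is constant, so the agent maximizes $\omega e_s$ over the segment $e_s\in[0,E]$. The objective is linear with positive slope $\omega$ on $e_s$ and slope zero on $e_u$, so the maximum sits at the vertex $e_s=E$, $e_u=0$. Since $\omega>0$ strictly, every interior split is strictly dominated by shifting an $\varepsilon$ of effort from $e_u$ to $e_s$, which makes this the unique optimum. This is the standard fact that a linear objective on a polytope attains its maximum at an extreme point. It is also the point where the paper's ``linearly'' does the work: if the reward were strictly concave in $s$, or if the agent had intrinsic value for $e_u$, an interior solution could survive.

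\textbf{Step 2: the proxy stops measuring.} At the optimum, $e_u^{*}=0$ regardless of $b$. The realized proxy therefore carries no information about the unrewarded component of $V$. Formally, $\partial s^{*}/\partial b=0$, so across agents or states that differ in $b$, the proxy's correlation with the unrewarded part of true value drops to zero once it is targeted. Before targeting, by contrast, effort allocated by some non-reward motive puts positive weight on $e_u$, and $s$ covaries with $V$. This yields the Goodhart sentence, and the Lucas reading follows because the measure's relationship to value changed precisely because the rule changed.

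\textbf{Main obstacle.} The statement is informal, so the hard part is pinning down ``true value has an unrewarded component'' in a way that makes the corner nontrivial. That requires that effort be substitutable across the two activities through a shared budget or cost; with separable costs $c_s(e_s)+c_u(e_u)$, the agent would simply set $e_u=0$ trivially and $e_s$ interior. I would state the common-budget assumption explicitly as the modeling content of the corollary, in the spirit of Holmström--Milgrom multitasking, and note that the result is robust to convex $c(\cdot)$ of total effort.
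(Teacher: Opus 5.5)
Your argument is correct and is essentially the paper's own: with the cost depending only on total effort, as in the paper's payoff $\omega s(e_s)-c(e_s+e_u)$, any effort on $e_u$ is strictly dominated by transferring it to $e_s$, so the optimum is the corner $e_u^{*}=0$, and the paper likewise leaves the richer multitask version to Holmstr\"om and Milgrom (1991). One correction to your aside: the corner comes from the reward being increasing in $e_s$ and flat in $e_u$, not from its linearity, so in your setup a strictly concave but increasing reward still gives $e_u^{*}=0$; of your two examples, only private value to the agent from $e_u$ breaks it.
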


The multitask formalization is Holmström and Milgrom (1991); the policy-invariance cousin is Lucas (1976). The Soviet archive supplies the comparative statics: reward tonnage and the factory casts one giant nail; reward chandelier weight and the ceilings come down.

\begin{proposition}[Soft budgets and queues]\label{p:soft}
(i) A producer rescued with probability $\pi$ chooses effort $e^{*}(\pi)$ decreasing in $\pi$; as $\pi\to 1$, $e^{*}\to e_{\min}$. A producer who cannot fail is a producer who need not produce (Kornai, 1980, 1986), and the rescue expectation is itself an equilibrium object, the center's sequentially rational refusal to cancel sunk projects (Dewatripont and Maskin, 1995). (ii) If the price is fixed at $\bar p<p^{*}$ and time is worth $v_z$ per unit, free entry into the queue drives the equilibrium wait to $w^{*}=(p^{*}-\bar p)/v_z$: the suppressed price difference is repaid in full, in time, to no one (Barzel, 1974).
\end{proposition}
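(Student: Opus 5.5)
Part (i) is a moral-hazard argument, and I would set it up as a one-shot choice problem. A producer chooses effort $e\in[e_{\min},\bar e]$ at convex cost $c(e)$, with $c'>0$, $c''>0$ and $c'(e_{\min})=0$ (normalizing $e_{\min}$ as the costless floor). Effort raises the probability of solvency $s(e)$, with $s'>0$ and $s''\le 0$. The producer's continuation payoff is $B$ if solvent; if insolvent it is also $B$ with the rescue probability $\pi$, and $0$ otherwise. Expected payoff is therefore
\[
U(e;\pi)=\bigl[s(e)+(1-s(e))\pi\bigr]B-c(e)=\pi B+(1-\pi)s(e)B-c(e).
\]
The cross-partial is $\partial^2U/\partial e\,\partial\pi=-s'(e)B<0$, so $U$ has strictly decreasing differences in $(e,\pi)$. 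By Topkis, or directly by implicit differentiation of the first-order condition $(1-\pi)s'(e)B=c'(e)$, it follows that $e^*(\pi)$ is decreasing. At $\pi=1$ the marginal return to effort vanishes, so $e^*=e_{\min}$, and continuity of the first-order condition gives $e^*(\pi)\to e_{\min}$ as $\pi\to 1$.

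For the equilibrium remark I would cite Dewatripont and Maskin (1995) rather than reprove it, noting only the key observation. Once the initial outlay is sunk, refinancing a bad project is ex post profitable for the center. Sequential rationality then forces $\pi$ high, and that feeds back into the effort choice above.

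Part (ii) is a free-entry accounting identity. With the price fixed at $\bar p<p^*$, demand exceeds supply, so the good is rationed by waiting. A consumer at the margin of acquiring the good pays the full cost $\bar p+v_z w$. Free entry into the queue means consumers keep joining until the marginal buyer is indifferent, that is, until the full cost equals the market-clearing price. I would take $p^*$ to be exactly that clearing price, the value at which quantity demanded equals the fixed supply. Solving $\bar p+v_z w^*=p^*$ gives $w^*=(p^*-\bar p)/v_z$.

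The welfare gloss follows from one observation: waiting time is a pure deadweight transfer, since no one receives it. The entire rent $(p^*-\bar p)$ per unit is therefore dissipated rather than redistributed. This assumes homogeneous $v_z$; with heterogeneous time values the same condition holds for the marginal queuer, and inframarginal consumers earn rents.

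The main obstacle is modelling rather than technical: the statement of (i) is informal, so the proof needs a primitive specification under which "decreasing in $\pi$" is well defined. I would commit to the specification above and be explicit that strict convexity of $c$, with an interior solution, is what delivers a single-valued $e^*$. Without those conditions only a monotone-selection statement survives, via Topkis.
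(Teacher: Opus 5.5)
Your proposal is correct and is essentially the paper's proof. Your payoff $\pi B+(1-\pi)s(e)B-c(e)$ is the paper's $R(e)-c(e)-(1-\pi)D(e)$ with $R$ constant and $D(e)=(1-s(e))B$; a fixed $R$ is exactly the case the paper invokes for $e^*\to e_{\min}$, and your decreasing-differences argument sharpens the same first-order-condition comparative static, while part (ii) is the identical free-entry condition $\bar p+v_z w^*=p^*$ under homogeneous time value.
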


Result (ii) deserves its bleak emphasis. The queue is a price. It is a price paid in the one currency that cannot be redistributed, collected by nobody, and it sorts by the shadow value of time rather than by willingness to pay, which is why the queue economy is not an egalitarian alternative to the price economy but a regressive implementation of it. The planned economies' joke literature understood this before their economists did. A useful boundary observation is Coase (1937): islands of planning inside firms are evidence not for planning but for its diminishing returns, since the firm expands exactly until internal planning costs meet external transaction costs. Walmart's interior is a plan; what disciplines Walmart is the exterior price.

\section{The Behavioral Layer}\label{sec:agent}

\begin{proposition}[The participation constraint]\label{p:part}
Under A6, an individual joins a revolt only if
\[
\beta_p\,\delta_p^{\tau}\; w(p)\; v(G) \;>\; \lambda\,(L+\kappa)^{\beta},
\]
where $G$ is the collective gain, $L$ what the individual loses under the status quo if the revolt is joined and fails or succeeds at her expense, $\kappa$ the participation cost, $\tau$ the delay, and $p$ compounds the probability of success with the probability that her participation is pivotal, which is $O(1/N)$ (Olson, 1965; Tullock, 1971). The pair $(\beta_p,\delta_p)$ denotes the quasi-hyperbolic present-bias and discount factors of A6; the subscript keeps them distinct from the value-function curvatures $\alpha,\beta$. Three discounts multiply the left side, distance, improbability, and the concave valuation of gains; one amplifier, $\lambda\approx 2.25$, multiplies the right.
\end{proposition}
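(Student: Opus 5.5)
The plan is to write the individual's choice as a comparison of two prospects under the A6 preferences and read off the inequality as the condition that the join prospect has higher value. Everything is coded relative to the status-quo reference point $r_t$, so staying out is the zero outcome and has value $v(0)=0$. Joining is a two-outcome prospect. In the good branch, which requires both that the revolt succeeds and that her participation is pivotal (probability $p$), she receives $G$ after delay $\tau$. In the other branch she pays $\kappa$ now and suffers $L$. I will deliberately take $L+\kappa$ as the loss in every branch, so that the loss term is an upper bound on what participation can cost. Read this way, the stated inequality is a necessary condition (``only if''), not a characterization.

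\textbf{Step 1: the value of joining.} Apply rank-dependent cumulative-prospect evaluation to the join prospect. The gain branch enters with decision weight $w^{+}(p)$ and is discounted by $\beta_p\delta_p^{\tau}$, since it arrives $\tau$ periods ahead under quasi-hyperbolic discounting. The loss is incurred now, so it carries no discount.

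\textbf{Step 2: bound the loss side.} Using A6, $v(-(L+\kappa))=-\lambda(L+\kappa)^{\beta}$. The loss decision weight is at most one in the worst case and at least the weight on the failure event. The cost $\kappa$ is paid with certainty, which lets me lower-bound the loss contribution by $\lambda(L+\kappa)^{\beta}$. I will state this certainty convention explicitly: participation cost and exposure are borne whether or not the revolt succeeds, because the pivotal event is $O(1/N)$ and is essentially disjoint from personal safety.

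\textbf{Step 3: compare.} Joining is chosen only if $\beta_p\delta_p^{\tau}w(p)v(G)-\lambda(L+\kappa)^{\beta}>0$, which is the displayed condition. The pivotality factor enters through $p$, as Olson and Tullock require. The accounting sentence in the statement then follows immediately: $\beta_p\delta_p^{\tau}<1$, $w(p)$ is small, and $v(G)=G^{\alpha}$ is concave in $G$, while the right side is multiplied by $\lambda>1$.

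\textbf{Main obstacle.} The hardest step is justifying that the certain loss receives full weight while the gain receives only $w(p)$. Under CPT a joint lottery with correlated branches does not decompose additively in general. I would first try to handle this by treating $\kappa$ and $L$ as sunk at the moment of the decision, so the loss is a degenerate prospect with weight $w^{-}(1)=1$. If the failure branch carries some residual probability, I would instead use monotonicity of $w^{-}$ and keep $\lambda(L+\kappa)^{\beta}$ as an upper envelope. Either way the result is still a necessary condition, which is all the proposition claims.
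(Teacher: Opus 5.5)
Your Steps 1--3 under the certainty convention are the paper's own argument. Abstention is coded at the reference with $v(0)=0$. The delayed gain is segregated and multiplied by $\beta_p\delta_p^{\tau}w(p)$. The loss $L+\kappa$ is charged now at full weight. Under that convention the displayed inequality is exactly the joining criterion, not merely a necessary condition, so ``only if'' is immediate. The paper makes the same simplification explicitly in the proof of Proposition~\ref{p:jwindow}(i), where it takes the failure branch at weight one.

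The gap is in your account of \emph{why} the inequality is necessary, and it undoes your fallback. You twice call $\lambda(L+\kappa)^{\beta}$ an upper bound (an ``upper envelope'') on the cost of participation and infer necessity from that. The implication runs the other way. Let $C$ be the true weighted disvalue of the loss side. Joining means $\beta_p\delta_p^{\tau}w(p)v(G)>C$. To conclude the displayed inequality you need $C\geq\lambda(L+\kappa)^{\beta}$, which is a lower bound. An upper bound $C\leq\lambda(L+\kappa)^{\beta}$ would make the inequality \emph{sufficient}, not necessary.

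That is exactly what happens in your residual-probability case. Suppose $\kappa$ is certain and $L$ is lost only on a branch of probability $q<1$. Rank-dependent evaluation then gives $C=\lambda\big[\kappa^{\beta}+w^{-}(q)\big((L+\kappa)^{\beta}-\kappa^{\beta}\big)\big]$. This is strictly below $\lambda(L+\kappa)^{\beta}$ whenever $w^{-}(q)<1$. The standard inverse-S weighting produces exactly that: with the Tversky--Kahneman loss parameter, $w^{-}(0.99)$ is roughly $0.95$. An agent whose gain term lies between $C$ and $\lambda(L+\kappa)^{\beta}$ therefore joins while the displayed inequality fails. Monotonicity of $w^{-}$ cannot rescue necessity, because the only unconditional lower bound available is $\lambda\kappa^{\beta}$.

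The same point applies to your Step 2. The bound $\lambda(L+\kappa)^{\beta}$ holds there only because your convention makes $L$ certain; $\kappa$ being certain is not enough. The fix is to drop the fallback. State the weight-one convention as the modeling assumption under which the proposition holds, not as a convenience that a bounding argument could replace.
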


The status quo, in this reading, is protected less by preference than by the architecture of preference. Prospect-theoretic treatments of rebellion exist (Berejikian, 1992; Levy, 1992), and relative deprivation is an older cousin (Gurr, 1970); what we add is the dynamics of the right-hand side, next.

\begin{proposition}[The J-window: threshold, opening, closing]\label{p:jwindow}
Under A6 and A7, suppose the economy has long delivered $c_t=\bar c$, so $r_{t_0}=\bar c$, and at $t_0$ consumption falls permanently to $\bar c-\Delta$. Then:
\begin{itemize}
\item[(i)] \emph{Threshold.} Suppose participation fails at the adapted point, $\beta_p\delta_p^{\tau}w(p)v(G)<\lambda(L+\kappa)^{\beta}$. Then there exists a unique $\bar z>0$ solving
\[
\lambda\Big[(\bar z+L+\kappa)^{\beta}-\bar z^{\beta}\Big] \;=\; \beta_p\,\delta_p^{\tau}\,w(p)\,G^{\alpha},
\]
and participation, the gap-$z$ analog of Proposition~\ref{p:part}'s comparison, becomes worthwhile exactly when the reference gap exceeds it, $|z_t|>\bar z$ (the failure branch is taken at weight one, a conservative simplification; see the proof). The perceptibility threshold is therefore not a free parameter: $\varepsilon\equiv\bar z$ is a derived object, increasing in the stake $L$, the cost $\kappa$, and loss aversion $\lambda$, and decreasing in the weighted prize $w(p)\,G^{\alpha}$.
\item[(ii)] \emph{Opening.} $z_{t_0+k}=c_{t_0+k}-r_{t_0+k}=-(1-\gamma)^{k}\Delta<0$: if $\Delta>\varepsilon$, the break itself opens the window, $|z_{t_0}|=\Delta>\varepsilon$, and revolt becomes individually rational. The mechanism is diminishing sensitivity: the incremental disvalue of staking $L+\kappa$, namely $\lambda[(|z|+L+\kappa)^{\beta}-|z|^{\beta}]$, falls toward zero as the ambient loss deepens.
\item[(iii)] \emph{Endogenous closing.} Absent further shocks, $|z_t|\leq\varepsilon$ after
\[
k^{*}=\frac{\ln(\Delta/\varepsilon)}{\ln\!\big(1/(1-\gamma)\big)}\;\approx\;\frac{1}{\gamma}\ln\frac{\Delta}{\varepsilon}
\]
periods: the reference adapts downward, the loss coding dissolves, and the constraint re-binds. Windows are transient, with length inversely proportional to the speed of hedonic adaptation; and breaks smaller than $\varepsilon$ open no window at all, which is why slow declines, accruing below threshold per period, produce no revolutionary moment however large their sum.
\end{itemize}
\end{proposition}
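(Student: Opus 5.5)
The three parts follow from three separate calculations. We do them in the order threshold, reference dynamics, closing time.

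For (i), we first write the gap-$z$ version of the comparison in Proposition~\ref{p:part}. An agent already coded at loss $|z|$ who joins and fails moves from $v(z)=-\lambda|z|^{\beta}$ to $-\lambda(|z|+L+\kappa)^{\beta}$. Taking the failure branch at weight one, the incremental cost is
\[
h(z)\equiv\lambda\big[(z+L+\kappa)^{\beta}-z^{\beta}\big],\qquad z=|z_t|\ge 0.
\]
The discounted, weighted benefit is the constant $B\equiv\beta_p\delta_p^{\tau}w(p)G^{\alpha}$. We then show that $h$ is continuous and strictly decreasing on $[0,\infty)$. The derivative is
\[
h'(z)=\lambda\beta\big[(z+L+\kappa)^{\beta-1}-z^{\beta-1}\big]<0,
\]
because $\beta<1$ and $L+\kappa>0$. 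This is the diminishing-sensitivity step. Next we show $h(z)\to 0$ as $z\to\infty$. The mean value theorem gives $h(z)\le\lambda\beta(L+\kappa)z^{\beta-1}$, which tends to zero.

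The hypothesis that participation fails at the adapted point is exactly $h(0)>B$. Together with $B>0$ and $\lim h=0<B$, the intermediate value theorem yields a root $\bar z$, and strict monotonicity makes it unique. The same monotonicity gives $h(z)<B$ if and only if $z>\bar z$. The comparative statics come from the implicit function theorem:
\begin{itemize}
\item $h$ is increasing in $L$, $\kappa$ and $\lambda$, so the root $\bar z$ moves right as any of them rises.
\item Raising $B$ (that is, $w(p)G^{\alpha}$) moves $\bar z$ left.
\end{itemize}
The signs follow because $\partial h/\partial z<0$.

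For (ii), we solve the reference recursion of A7 with $c_t=\bar c-\Delta$ for $t\ge t_0$. Set $u_t=r_t-(\bar c-\Delta)$. Then $u_{t+1}=(1-\gamma)u_t$ with $u_{t_0}=\Delta$, so $z_{t_0+k}=-u_{t_0+k}=-(1-\gamma)^k\Delta$. At $k=0$ the gap is $|z_{t_0}|=\Delta$. By (i), participation is rational exactly when $\Delta>\varepsilon=\bar z$.

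For (iii), $|z_{t_0+k}|$ is strictly decreasing in $k$. We solve $(1-\gamma)^{k}\Delta\le\varepsilon$ by taking logarithms, which gives $k\ge\ln(\Delta/\varepsilon)/\ln(1/(1-\gamma))$. The stated approximation uses $\ln(1/(1-\gamma))=\gamma+O(\gamma^2)$; this is a first-order approximation, not an identity, at the calibrated $\gamma\in[0.2,0.4]$. The "no window" claim for small breaks follows because the gap's maximum is attained at the break. For a sequence of per-period decrements each below $\varepsilon$, we need a bound on the stationary gap. Under a steady decline of $d$ per period, $u$ satisfies $u_{t+1}=(1-\gamma)u_t+(1-\gamma)d$ (up to timing conventions), so its fixed point is $(1-\gamma)d/\gamma$. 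Staying below threshold therefore requires $d<\gamma\varepsilon/(1-\gamma)$ rather than merely $d<\varepsilon$.

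The main obstacle is this last claim. As literally stated, per-period increments below $\varepsilon$ do not suffice when $\gamma<1/2$, and I would present it with the corrected condition on $d$ relative to $\gamma\varepsilon$. The existence argument in (i) is routine once the weight-one failure simplification is adopted.
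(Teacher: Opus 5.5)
Your treatment of (i)--(iii) follows the paper's proof. You use the weight-one failure branch, take the sign of $\lambda\beta[(z+L+\kappa)^{\beta-1}-z^{\beta-1}]$ from $\beta<1$, and combine the intermediate value theorem with strict monotonicity to get a unique $\bar z$, with participation exactly when $|z|>\bar z$. You then solve the reference recursion linearly and take logarithms to get $k^{*}$. Your mean-value bound $h(z)\le\lambda\beta(L+\kappa)z^{\beta-1}$ proves the limit $h\to 0$, which the paper only asserts.

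\textbf{The gap is in your repair of the slow-decline clause.} You are right that this clause does not follow from (ii)--(iii), which concern a single break \emph{absent further shocks}; the paper's proof does not address it. But your recursion is not A7's, and the timing is not a free convention here. A7 and the statement's own $r_{t_0}=\bar c$, $|z_{t_0}|=\Delta$, which you used in (ii), fix $r_{t+1}=(1-\gamma)r_t+\gamma c_t$.

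With $c_{t+1}=c_t-d$ this gives $z_{t+1}=(1-\gamma)z_t-d$. Starting from the adapted point, $|z_{t_0+k}|=d\,[1-(1-\gamma)^{k+1}]/\gamma$, which increases to $d/\gamma$. Your recursion $u_{t+1}=(1-\gamma)u_t+(1-\gamma)d$ instead comes from updating the reference with current consumption. That timing would make the impact gap of a one-time drop equal to $(1-\gamma)\Delta$, contradicting your own part (ii).

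The consequences for the clause are these:
\begin{itemize}
\item Under an indefinitely sustained decline, no window ever opens if and only if $d\le\gamma\varepsilon$.
\item The literal clause ($d<\varepsilon$ per period) fails for every $\gamma\in(0,1)$, not only for $\gamma<1/2$.
\item Your proposed condition $d<\gamma\varepsilon/(1-\gamma)$ is not sufficient. At $\gamma=0.3$, $\varepsilon=1$, $d=0.4$ it holds, yet $|z_{t_0+3}|\approx 1.01>\varepsilon$.
\item For $d>\gamma\varepsilon$, a window that opens never closes, since $|z_t|\to d/\gamma>\varepsilon$. The endogenous-closing result therefore genuinely depends on the \emph{absent further shocks} qualifier.
\end{itemize}
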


Part (ii) is Davies (1962) derived rather than hypothesized, and Tocqueville (1856) formalized: revolt arrives not at the bottom but at the break. Parts (i) and (iii) are, to our knowledge, new in this form: the threshold makes the window's edge a derived object rather than a free parameter, and the closing makes the window itself transient. The pair earns its keep quantitatively. Writing the half-life of adaptation as $h=\ln 2/\!\ln\!\big(1/(1-\gamma)\big)$, the window is $k^{*}=h\log_2(\Delta/\varepsilon)$. Panel-data estimates of adaptation to income put $h$ between about 1.4 and 3.1 years (Di Tella, Haisken-De New, and MacCulloch, 2010; Clark, Frijters, and Shields, 2008). For a shock eight times the perceptibility threshold, $\Delta/\varepsilon=8$, the window is four to nine years. Figure~\ref{fig:jwindow} displays the mechanism and the calibration.

Two caveats before the history. The calibration carries an adaptation speed estimated from subjective-well-being panels into the reference process that governs revolt participation. That these are one process is a bridging assumption: both are expressions of hedonic adaptation to material circumstances, which is why we regard the transfer as defensible, but it is a transfer, and the window lengths inherit its uncertainty. Second, the paragraph below is an illustration, not a test. We have made no census of windows that opened and closed without incident, so four celebrated successes cannot by themselves carry evidential weight against survivorship concerns; the systematic statement is prediction P3, which conditions on the window rather than sampling on the outcome.

The historical fit, for what illustration is worth, is not subtle. The revolutions that succeeded sit within half a decade of their precipitating breaks: 1786--88 fiscal collapse and 1789; the war losses beginning 1914 and 1917, a case where $\Delta$ kept growing and held the window open; 1977--78 and Iran 1979; the reform signals of 1988--89 and the autumn of 1989. The revolutions that were confidently expected and never came, in the long stagnations, are the cases where decline arrived slowly enough for the reference to track it, $\Delta$ accruing below $\varepsilon$ per period, no window ever opening. And the lament of every failed revolutionary generation, that the moment was missed, is not an excuse. It is the arithmetic of $\gamma$.

\begin{figure}[t]
\centering
\includegraphics[width=\textwidth]{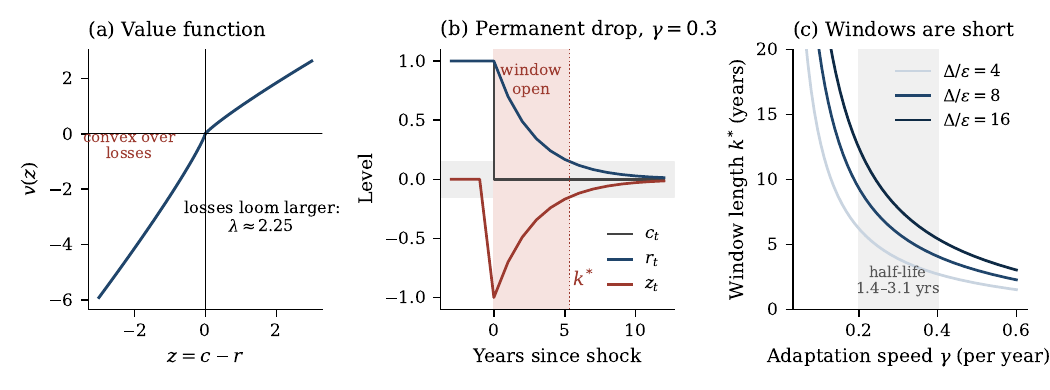}
\caption{\small \textbf{The J-window.} (a) The cumulative-prospect value function, loss aversion $\lambda\approx 2.25$, convex over losses. (b) A permanent one-unit consumption drop at $t=0$ with adaptation speed $\gamma=0.3$: the gap $z_t=c_t-r_t$ opens the window (shaded) and closes it at $k^{*}\approx 5.3$ years. (c) Window length $k^{*}=h\log_2(\Delta/\varepsilon)$ against $\gamma$; the shaded band is the adaptation speed implied by panel estimates in the happiness literature.}
\label{fig:jwindow}
\end{figure}

\begin{corollary}[Stakes]\label{c:stakes}
The constraint tightens in $L$. ``Nothing to lose but chains'' is the observation that the constraint loosens as $L\to 0$; the mortgage, the pension account, and the diploma are the observation that social policy raises $L$ across the base of the distribution, tightening the constraint by construction rather than by conspiracy.
\end{corollary}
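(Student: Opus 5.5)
The statement to prove is Corollary~\ref{c:stakes}. The plan is to read it directly off the comparison in Proposition~\ref{p:part}, and off its gap-$z$ analog in Proposition~\ref{p:jwindow}(i).

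First, the claim that the constraint tightens in $L$. The right-hand side of Proposition~\ref{p:part} is $\lambda(L+\kappa)^{\beta}$. With $\lambda>0$ and $\beta\in(0,1]$, and since $L+\kappa\ge 0$, this is strictly increasing in $L$. The left-hand side, $\beta_p\delta_p^{\tau}w(p)v(G)$, does not depend on $L$. So the set of parameters for which participation holds shrinks as $L$ rises. Equivalently, there is a critical stake
\[
L^{\dagger}=\big(\beta_p\delta_p^{\tau}w(p)G^{\alpha}/\lambda\big)^{1/\beta}-\kappa,
\]
and participation requires $L<L^{\dagger}$.

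Second, the same conclusion should hold inside an open window, where the relevant comparison is the threshold equation of Proposition~\ref{p:jwindow}(i). The map
\[
\phi(z,L)=\lambda\big[(z+L+\kappa)^{\beta}-z^{\beta}\big]
\]
is increasing in $L$, because $\partial_L\phi=\lambda\beta(z+L+\kappa)^{\beta-1}>0$. It is decreasing in $z$ for $\beta<1$, by concavity of $t\mapsto t^{\beta}$. Applying the implicit function theorem to $\phi(\bar z,L)=\beta_p\delta_p^{\tau}w(p)G^{\alpha}$ gives
\[
d\bar z/dL=-\partial_L\phi/\partial_z\phi>0.
\]
So $\bar z=\varepsilon$ rises with $L$. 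By Proposition~\ref{p:jwindow}(ii), a break of a given size then opens the window only if $\Delta>\varepsilon$, which is now a harder condition. By the closing formula of part (iii), $k^{*}$ is decreasing in $\varepsilon$, so any window that does open is shorter. Both effects tighten the constraint.

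Third, the limit $L\to 0$. The right-hand side of Proposition~\ref{p:part} falls to $\lambda\kappa^{\beta}$, its minimum. Likewise $\bar z$ falls to the root of $\lambda[(z+\kappa)^{\beta}-z^{\beta}]=\beta_p\delta_p^{\tau}w(p)G^{\alpha}$, which is its smallest value; this is the formal reading of ``nothing to lose but chains.''

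The remaining sentence of the corollary, about mortgages, pensions and diplomas, is an interpretation rather than a mathematical claim. It treats these assets as raising $L$ across the distribution, and I would flag it as such rather than prove it.

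The only delicate step is the comparative statics of $\bar z$. I need $\partial_z\phi<0$ strictly, which requires $\beta<1$. I also need existence and uniqueness of $\bar z$, which I inherit from Proposition~\ref{p:jwindow}(i). Uniqueness matters because the monotone comparative statics are stated for that unique root, and the implicit function theorem applies there.
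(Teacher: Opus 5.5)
Your proposal is correct and follows the paper's route: the paper gives no separate proof and reads the corollary off the right-hand side $\lambda(L+\kappa)^{\beta}$ of Proposition~\ref{p:part} and the monotonicity of $\bar z$ in $L$ asserted in Proposition~\ref{p:jwindow}(i), where the simple monotone argument on $\phi$ does the work your implicit-function step does. One small correction to your third step: a positive root at $L=0$ exists only if $\lambda\kappa^{\beta}>\beta_p\delta_p^{\tau}w(p)G^{\alpha}$; otherwise $\bar z\downarrow 0$ as $L\downarrow L^{\dagger}\geq 0$, and below $L^{\dagger}$ participation pays with no break at all, which is the stronger reading of ``nothing to lose.''
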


\begin{corollary}[The lottery ticket]\label{c:lottery}
Overweighting of small probabilities, $w(p)>p$, subsidizes in psychological terms every within-system jackpot: the startup, the lottery, the viral post, the option grant. Mobility need not be common for the hope of mobility to be cheap to produce and expensive to abandon, and the system produces it in volume. A tension with Proposition~\ref{p:part} deserves resolution rather than a shrug: the same overweighting inflates the revolt gamble's $w(p)$, so which sign wins? The comparison decides, not the level. The within-system jackpot is a limited-liability bet, a small ticket with no $\lambda$-weighted stake; the revolt gamble carries $\lambda(L+\kappa)^{\beta}$ on its downside. Overweighting subsidizes both gambles; loss aversion taxes exactly one. The behavioral core does not suppress gambling, it channels it toward the gambles the system sells.
\end{corollary}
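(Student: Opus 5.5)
The statement to prove is Corollary~\ref{c:lottery}. I would treat it as a comparative statement about two gambles evaluated under A6, and read its claims as sign claims that follow from Proposition~\ref{p:part}.

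\emph{Setting up the two gambles.} Model the within-system jackpot as a ticket of small cost $s$ paying a prize $J$ with small probability $p_J$. Its CPT value is $w(p_J)v(J)-\lambda s^{\beta}$, where $s$ is the only downside and is bounded because liability is limited. Model the revolt gamble as in Proposition~\ref{p:part}: value $\beta_p\delta_p^{\tau}w(p)v(G)-\lambda(L+\kappa)^{\beta}$.

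\emph{First claim: overweighting subsidizes both.} Consider the effect of raising $w(\cdot)$ from the expected-utility benchmark $w(p)=p$ to $w(p)>p$ at small $p$. This raises the gain term of each gamble by $(w(p)-p)$ times a positive factor. So the subsidy has the same sign in both cases, and the level of overweighting cannot decide which gamble is chosen. Only the net comparison can.

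\emph{Second claim: loss aversion taxes exactly one.} Compare the downside terms.
\begin{itemize}
\item For the ticket, $\lambda s^{\beta}$ vanishes as $s\to 0$. Moreover, a ticket framed as a small expense from a reference point already coded as spending carries no loss coding at all. I would state this explicitly as a modeling choice: limited-liability tickets are integrated into the reference rather than coded as losses.
\item For the revolt, $\lambda(L+\kappa)^{\beta}$ is bounded away from zero whenever $L+\kappa>0$.
\end{itemize}
Fix the same overweighting for both gambles. Then there is a range of parameters in which the ticket's CPT value is positive while the revolt's is negative. Concretely, take $s$ small enough that $\lambda s^{\beta}<w(p_J)v(J)$. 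At the same time, Proposition~\ref{p:part} fails at the adapted reference point. This is exactly the hypothesis of Proposition~\ref{p:jwindow}(i), under which participation requires $|z_t|>\bar z$. "Channeling" then means that agents accept the ticket and reject the revolt outside J-windows.

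\emph{Third claim: "cheap to produce, expensive to abandon."} I would read this via A7. Once hope of mobility is incorporated into $r_t$, abandoning it is coded as a loss and weighted by $\lambda$. The producer's cost is only the small expected payout $p_J J$.

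The main obstacle is that "channels" and "cheap/expensive" are informal. The honest proof shows existence of a nonempty parameter region, not a universal ordering. I would also flag that the $\lambda$-asymmetry rests on the assumption that the ticket stake is not loss-coded.
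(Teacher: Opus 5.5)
Your argument for the main claim is the paper's own. The paper gives no appendix proof of this corollary; its argument is the one written into the statement. Overweighting $w(p)>p$ raises the gain term of both gambles, while only the revolt carries the $\lambda(L+\kappa)^{\beta}$ downside, so the net comparison, not the level of overweighting, decides.

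Your flag that the asymmetry rests on the ticket's stake not being loss-coded is exactly what the paper's ``small ticket with no $\lambda$-weighted stake'' presupposes. It is indispensable rather than a convenience. The alternative of letting $s\to 0$ at fixed $(p_J,J)$ eventually prices the ticket below its actuarial cost $p_J J$, so it stops being a gamble the system sells. At a fair price $s=p_J J$ with $\alpha=\beta$, a loss-coded ticket is worth buying only if $w(p_J)>\lambda p_J^{\beta}$. The condition $w(p)>p$ alone does not deliver this, because $\lambda p_J^{\beta}>p_J$. Pairing the non-loss-coding assumption with the adapted-point failure hypothesis of Proposition~\ref{p:jwindow}(i) to give ``channeling'' a precise meaning is sound.

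The step that fails is your reading of ``cheap to produce, expensive to abandon'' through A7. A7 moves the reference only with realized consumption, $r_{t+1}=(1-\gamma)r_t+\gamma c_t$. A ticket that has not paid out never enters $c_t$, so nothing in A7 puts the hope of mobility into $r_t$. Under A7, forgoing the ticket changes $c_t$ only by the saved price $s$, which is not a $\lambda$-weighted loss.

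Switching to an expectations-based reference to rescue the reading would backfire. With the hope inside the reference, every losing draw, which has probability $1-p_J$, becomes a $\lambda$-weighted loss. That reinstates exactly the downside your second claim removed.

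Read the phrase off A6 instead, as your first claim already does. The producer's expected cost $p_J J$ does not depend on $w$. Overweighting raises what the agent forfeits by giving up the ticket from $p_J v(J)$ to $w(p_J)v(J)$. The single wedge $(w(p_J)-p_J)v(J)$ makes hope cheap to supply and costly to abandon, with no reference-point dynamics and no conflict with the asymmetry.
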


\begin{corollary}[The range of solidarity]\label{c:solidarity}
Repeated-game cooperation sustains itself where recognition and repetition are dense, small groups in the sense of Dunbar (1992), and hands off to anonymous mechanisms, price or coercion, beyond that range. The kibbutz working and Gosplan failing are the two faces of one folk theorem; the ultimatum-game evidence (Güth, Schmittberger, and Schwarze, 1982; Fehr and Gächter, 2002; Henrich et al., 2001) certifies both the reality of the fairness norm and its radius.
\end{corollary}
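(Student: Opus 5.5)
The corollary is informal as stated, so the first task is to fix a model in which ``recognition and repetition are dense'' has a precise meaning, and then derive the hand-off as a comparative static in group size. I would use a population of $n$ agents randomly matched each period to play a prisoner's dilemma with temptation payoff $T$, cooperation payoff $R$, defection payoff $P$, and discount factor $\delta_p$ (the same $\delta_p$ as in Proposition~\ref{p:part}). Each agent can recognize, that is, store the history of, at most $D$ others; this recognition capacity is the Dunbar primitive. Under bilateral enforcement, a deviation against a partner is punished only when the deviator meets that partner again and the partner recognizes him. The per-period probability of that event is $\rho(n)=\min\{1,D/n\}\cdot\frac{1}{n-1}$.

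The first step is the standard one-shot-deviation calculation for grim or trigger strategies with this effective rematching rate. Cooperation is sustainable iff
\[
T-R\;\le\;\frac{\delta_p\,\rho(n)}{1-\delta_p}\,(R-P)
\]
(up to the usual geometric correction). For fixed $\delta_p,D$ the right side is decreasing in $n$ and tends to zero. So there is a finite threshold $\bar n(\delta_p,D)$ such that decentralized repeated-game cooperation holds for $n\le\bar n$ and fails beyond it. This is the ``radius'' of the fairness norm: the folk theorem applies, but only where $\rho(n)$ is large, and the kibbutz sits inside the radius.

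The second step shows that anonymous mechanisms do not have this decay. A price mechanism with binding budget constraints, or a coercive third party imposing a fine $\phi\ge T-R$, makes the deviation unprofitable in a one-shot way. The incentive condition is then independent of $n$, so beyond $\bar n$ only these mechanisms sustain exchange. The Gosplan half of the claim then follows from combining this with Proposition~\ref{p:dilemma} and Proposition~\ref{p:soft}. Coercion without prices hits the input horn and the soft-budget effort collapse, so at scale the surviving anonymous mechanism is the price. The experimental citations would enter only as evidence for the primitives, namely that $R>P$ is valued and that punishment is exercised, not as part of the derivation.

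The main obstacle is community enforcement (Kandori, 1992; Ellison, 1994). Contagion strategies sustain cooperation in any finite anonymous population given sufficient patience, which would seem to defeat the radius claim. My approach is to make two restrictions explicit. First, contagion requires each agent to respond to defections he did not witness, which needs either observation of others' play or tolerance of the resulting instability. Second, under a small tremble $\eta$ the contagion equilibrium unravels at a rate growing in $n\eta$. I would state the corollary for $\eta>0$ fixed and show that the expected cooperative payoff under contagion goes to $P$ as $n\to\infty$. If that proof is too delicate, I would instead restrict attention to bilateral (recognition-based) strategies and flag the restriction, in the same spirit as Remark~\ref{r:honest}.
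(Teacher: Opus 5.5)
\paragraph{Comparison with the paper.} The paper never proves this corollary. Appendix~\ref{app:proofs} has no entry for it, and it is one of the citation-carrying statements the paper itself labels imports. It rests on the folk theorem's preconditions, Dunbar, and the experimental literature, not on a derivation from A1--A10. Your proposal is therefore a genuinely different and more demanding route: it builds the model the paper only gestures at and turns ``radius'' into a comparative static in $n$. The bilateral step is correct, and it is exact rather than approximate. Bilateral strategies decompose the game into independent pairwise relationships, and the expected discounted number of future recognized meetings with a given partner is exactly $\delta_p\rho(n)/(1-\delta_p)$. Under A6 the right side should also carry $\beta_p$, which only shrinks $\bar n$. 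Be clear about what this delivers. The radius comes from the $1/(n-1)$ dilution of rematching and exists even for $D=\infty$, so $D$ binds only if $\bar n(\delta_p,\infty)>D$. The model yields ``small groups'', not Dunbar's scale.

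\paragraph{The community-enforcement step.} This is the step that does not go through as planned, because the tremble argument aims at the wrong margin. Kandori's and Ellison's positive results need $\delta_p$ above a threshold that rises to one with $n$. At the fixed $\delta_p$ where $\bar n$ is defined, they never threatened the radius. Meanwhile trembles dispose only of contagion and leave every other community-enforcement scheme open, and the fallback gives up on them altogether.

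A short bound covers all strategies at once. It needs only that matching is i.i.d.\ uniform and that behavior depends only on own match histories: no records, gossip, or public signals (public randomization is harmless).
\begin{itemize}
\item A date-$0$ deviation can alter the histories of at most $2^{s-1}$ other agents by date $s$.
\item Suppose the deviator then plays what his strategy prescribes at his history with the date-$0$ action relabeled $C$. His play and payoffs then coincide with the equilibrium ones until he meets an altered agent. By date $s$ that event has probability at most $\min\{1,2^{s}/(n-1)\}$.
\item His loss is therefore at most $(T-S)\sum_{s\geq1}\delta_p^{s}\min\{1,2^{s}/(n-1)\}$, with $S$ the sucker payoff. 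This tends to zero as $n\to\infty$ at fixed $\delta_p$, while the date-$0$ gain is at least $\min\{T-R,P-S\}>0$.
\end{itemize}
So beyond some $\hat n(\delta_p)$, no equilibrium prescribes cooperation at any history reached with positive probability. This holds with or without trembles and with no restriction to bilateral strategies. Two by-products follow. First, $\hat n$ can be far larger than the bilateral $\bar n$, so the model proves a radius exists at fixed patience but does not locate it. Second, the bound names what escapes it, namely information that travels faster than matching, and that is exactly the institutional hand-off the corollary asserts.

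\paragraph{Two smaller points.} The price leg of your second step is asserted, not derived. A budget constraint leaves the prisoner's-dilemma payoffs untouched. What removes the temptation is either simultaneous spot exchange, which is a change of stage game you would have to model, or enforceable contract, which is your fine again. The fine works only with detection probability $p_d$ satisfying $p_d\phi\geq T-R$. Whether detection and the credibility of $\phi$ survive at scale is exactly what A4 and Proposition~\ref{p:soft} put in doubt for Gosplan.

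The experimental citations also cannot serve as evidence for your primitives. Ultimatum rejections are one-shot, and Fehr and G\"achter (2002) used stranger matching precisely to shut down repeated-game incentives. They document punishment outside any repeated-game equilibrium, a primitive your model excludes. If you add a within-match punishment option, that primitive enforces cooperation one-shot, independently of $n$. The ``certifies its radius'' clause therefore cannot come from them within your model.
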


\begin{corollary}[The open-window share]\label{c:share}
If qualifying breaks ($\Delta>\varepsilon$) arrive as a Poisson process at rate $\mu$ and each opens a window of length $k^{*}$, the long-run fraction of time at least one window is open is $1-e^{-\mu k^{*}}$, the busy probability of an M/G/$\infty$ system. At one qualifying break per generation, $\mu=0.04$ per year, and $k^{*}=6$, about a fifth of history is window.
\end{corollary}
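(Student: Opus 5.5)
The plan is to treat the open-window share as the busy probability of an M/G/$\infty$ queue, with breaks as arrivals and windows as service times. Qualifying breaks arrive as a Poisson process $\{T_j\}$ of rate $\mu$, and break $j$ keeps a window open on $[T_j,T_j+k^{*})$. By Proposition~\ref{p:jwindow}(iii), each window's length is the deterministic $k^{*}$; it could more generally be a random length $K_j$, i.i.d.\ with mean $k^{*}$. Let $M(t)$ be the number of windows open at date $t$. Then at least one window is open at $t$ exactly when $M(t)\ge 1$. For fixed $t$ in steady state, $M(t)$ counts the arrivals in the interval $(t-k^{*},t]$. By the defining property of the Poisson process, that count is Poisson with mean $\mu k^{*}$. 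Hence $\Pj[M(t)=0]=e^{-\mu k^{*}}$ and $\Pj[M(t)\ge1]=1-e^{-\mu k^{*}}$.

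If window lengths are random, I would invoke the marking/thinning theorem for Poisson processes. The points $(T_j,K_j)$ form a marked Poisson process. The number of them with $T_j\le t<T_j+K_j$ is Poisson with mean
\[
\mu\int_0^\infty \Pj[K>s]\,ds=\mu\,\E[K]=\mu k^{*}.
\]
The formula is therefore insensitive to the window-length distribution, which is the standard M/G/$\infty$ insensitivity result.

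The step that needs most care is passing from the pointwise probability to the \emph{long-run fraction of time}. Two routes work. One takes the process as stationary from $t=-\infty$. The other notes that for $t\ge k^{*}$ the distribution of $M(t)$ already equals the stationary one, so the start-up transient has length $k^{*}$. The indicator process $\mathbf 1\{M(t)\ge1\}$ is then stationary and ergodic, because Poisson processes are mixing. The Birkhoff ergodic theorem, or a renewal-reward argument over alternating idle and busy periods, gives
\[
\frac1T\int_0^T\mathbf 1\{M(t)\ge1\}\,dt \;\to\; 1-e^{-\mu k^{*}}\quad\text{a.s.}
\]

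One modelling caveat should be noted. The window-length formula for $k^{*}$ in Proposition~\ref{p:jwindow}(iii) assumes no further shocks during the window. A break inside an open window would reset and lengthen the reference gap. The corollary sidesteps this by treating windows as independent intervals of length $k^{*}$ that may overlap, so it measures the union of their coverage. The model is self-consistent on that reading.

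Finally, the arithmetic: $\mu k^{*}=0.04\times6=0.24$, and $1-e^{-0.24}\approx0.21$, which is about a fifth.
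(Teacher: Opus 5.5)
Your proposal is correct and follows the paper's route: the paper's proof simply cites the stationary busy probability $1-e^{-\mu k^{*}}$ of an M/G/$\infty$ system. You additionally derive that probability from the Poisson count of arrivals on $(t-k^{*},t]$ and justify reading it as a long-run time fraction via ergodicity, and your arithmetic $1-e^{-0.24}\approx 0.21$ checks out.
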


The corollary relocates the model's scarcity. What protects the regime is not the rarity of windows, which are open a nontrivial share of the time; it is the conjunction, inside a window, of a public-signal architecture and an inoperable dial. That conjunction is P3, and Section~\ref{sec:coord} is about its first leg.

\section{The Coordination Layer}\label{sec:coord}

An open window is necessary and not sufficient. Revolt is a coordination game, and which equilibrium gets played is a property of the information architecture, not of the grievance. We use the canonical regime-change game: a continuum of agents chooses attack or abstain at cost $t\in(0,1)$; the regime falls if the attacking mass reaches $\theta$; attackers receive 1 if it falls. Common knowledge is Aumann's (1976): an event is common knowledge at a state if the cell of the meet of the players' information partitions containing that state lies inside the event, so common knowledge of anything requires a \emph{public event} in which it can live.

\begin{proposition}[Coordination and the information architecture]\label{p:coord}
In the regime-change game:
\begin{itemize}
\item[(i)] Under complete information there are two equilibria, attack and no-attack, for all $\theta\in(0,1)$: coordination is possible everywhere in the interior, and selection is a matter of focal events (Schelling, 1960).
\item[(ii)] Under private noisy signals $x_i=\theta+\sigma\varepsilon_i$ only, the equilibrium is unique even as $\sigma\to 0$, and attack succeeds only for $\theta<\theta^{*}=1-t$ (Carlsson and van Damme, 1993; Morris and Shin, 1998, 2003): the entire interior band where coordination was possible under complete information goes quiet.
\item[(iii)] Adding a sufficiently precise public signal restores multiplicity, and with it the possibility of coordinated attack, disproportionately to its information content (Hellwig, 2002; Morris and Shin, 2002; Angeletos and Werning, 2006). No finite ladder of privately exchanged confirmations substitutes: with private channels, no finite order of mutual knowledge supports the risky action (Rubinstein, 1989), and even approximate common knowledge requires a public event to carry the common $p$-belief (Monderer and Samet, 1989).
\end{itemize}
Under A8, the personalized-media environment contains no public signal correlated with aggregate discontent; the cell of the meet containing the true state pools high- and low-discontent worlds, so ``the threshold has been crossed'' is common knowledge nowhere, and the system sits in regime (ii): unique no-attack for all but extreme fundamentals, however high the grievance.
\end{proposition}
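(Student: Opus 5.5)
The plan treats the three parts separately. Part (ii), the uniqueness result in the global game, does the real work. The final sentence about A8 is a mapping of the environment onto (ii) via Aumann's meet.

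For (i), under complete information I will check the two candidate profiles directly. If every agent attacks, the attacking mass is $1\geq\theta$, so the regime falls and each attacker nets $1-t>0$; deviating to abstain yields $0$, so all-attack is an equilibrium. If no one attacks, the mass is $0<\theta$, so an attacker would earn $-t<0$; hence no-attack is also an equilibrium. Both arguments hold for every $\theta\in(0,1)$.

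For (ii), I will follow Morris and Shin (2003). I restrict first to monotone (switching) strategies, in which an agent attacks iff $x_i<x^{*}$, and then remove the restriction by iterated deletion of dominated strategies. The restriction is harmless because dominance regions exist: for $\theta\leq 0$ attacking is dominant, and for $\theta>1$ abstaining is dominant. Given cutoff $x^{*}$, the attacking mass at $\theta$ is $\Pr[x_i<x^{*}\mid\theta]=\Phi((x^{*}-\theta)/\sigma)$. The regime falls iff this mass is at least $\theta$, which defines a threshold $\theta^{*}$ by $\Phi((x^{*}-\theta^{*})/\sigma)=\theta^{*}$.

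Indifference of the marginal agent at $x^{*}$ requires $\Pr[\theta<\theta^{*}\mid x^{*}]=t$. Under an improper uniform prior this is $\Phi((\theta^{*}-x^{*})/\sigma)=t$. Adding the two equations and using $\Phi(a)+\Phi(-a)=1$ gives $\theta^{*}=1-t$ exactly, independent of $\sigma$. The solution is unique because $x^{*}$ is then pinned down by the indifference condition.

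To show that no non-monotone equilibrium survives, I run the Laplacian iteration. Starting from the dominance regions, define sequences $\underline{x}_n\uparrow$ and $\overline{x}_n\downarrow$: each agent must attack below $\underline{x}_n$ and must abstain above $\overline{x}_n$. Both sequences converge to fixed points of the best-response-to-cutoff map, and by the computation above that map has a unique fixed point, so the two limits coincide. With a proper normal prior, uniqueness requires the private noise to be small relative to the prior's spread, and $\sigma\to0$ delivers this. The outcome is that attack succeeds only for $\theta<1-t$.

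For (iii), I add a public signal $y=\theta+\tau\eta$. The fixed-point equation in $\theta^{*}$ then acquires a slope term of order $\sigma/\tau^{2}$, and multiplicity arises when $\sigma/\tau^{2}$ exceeds $\sqrt{2\pi}$. This is the Hellwig and Angeletos–Werning condition, which I will cite rather than rederive. For the statement that no finite ladder of confirmations substitutes, I cite Rubinstein's induction and the Monderer–Samet characterization of common $p$-belief.

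The closing claim about A8 follows from Aumann's definition. A8 says no public event correlated with discontent exists, so every cell of the meet contains both high-discontent and low-discontent states. The event that the threshold has been crossed therefore contains no cell of the meet and is common knowledge at no state. Only the private-signal structure of (ii) remains.

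The main obstacle is the uniqueness step in (ii) without assuming monotone strategies: the iterated-dominance squeeze, together with the care needed to move from the uniform-prior limit to a proper prior. I will handle it by the standard argument that the difference between the two limiting cutoffs must solve the same fixed-point equation and is therefore zero.
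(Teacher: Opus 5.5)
Your proposal is correct and follows essentially the paper's route: direct verification of both profiles for (i), the switching-equilibrium indifference condition plus iterated dominance for (ii), citation of Hellwig, Morris--Shin, Angeletos--Werning, Rubinstein and Monderer--Samet for (iii), and Aumann's meet for the A8 clause. Your step using $\Phi(a)+\Phi(-a)=1$ is the paper's Laplacian-belief argument written out for Gaussian noise: under the diffuse prior, the marginal agent's belief about the attacking share is uniform. Your explicit multiplicity condition $\sigma/\tau^{2}>\sqrt{2\pi}$ for (iii) is a correct sharpening of a point the paper only cites.
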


Figure~\ref{fig:coord} draws the three regimes. The historical public-signal generators are exactly the objects the sociology of revolution keeps finding: the square, single-channel broadcasting, the Friday prayer network of 1978--79, the Leipzig Monday demonstrations whose informational cascade Lohmann (1994) documents week by week, the rituals whose function Chwe (2001) identifies as the manufacture of common knowledge. Preference falsification (Kuran, 1989, 1991) deepens the mechanism: under conformity pressure the fundamental $\theta$ is itself privately misperceived, which is why the cascades that do occur arrive, in Kuran's phrase, now out of never, and why regimes invest in information manipulation along precisely this channel (Edmond, 2013).

The reading of the present is then uncomfortable from both directions. The personalized feed dismantles the public-signal generators without a censor, by revealed preference, one perfectly tailored screen at a time. A million people holding the same grievance separately is not a movement; it is the definition of regime (ii). The point is not that discontent is low. The point is that the mapping from discontent to coordinated action runs through an architecture that no longer exists, and its disappearance required no one's decision.

\begin{remark}[Prices, virality, and the scope of A8]\label{r:scope8}
Four objections to A8 deserve answers on the page, two of them backed by evidence we would be foolish to wave away. First, the empirical literature runs against the naive sign: social-media penetration raised protest participation in Russia (Enikolopov, Makarin, and Petrova, 2020), and mobile coverage raised mobilization in Africa's downturns (Manacorda and Tesei, 2020). Both are treatments of \emph{connectivity}, access that lowers the cost of learning that others act. A8 is a claim about \emph{curation}, the replacement of commonly observed content by individually tailored feeds of privately unknown composition. Connectivity lowers coordination costs, which is their finding (their own mechanism evidence favors precisely this channel over the information channel); curation removes the layer on which common knowledge lives, which is our claim; both can be true at once, and the net sign is an empirical question, not a theorem. We state it as such, and P3 is its test. Second, the uniqueness we lean on is the static global game. In the dynamic version, learning from failed attacks restores multiplicity and generates waves of attacks with no exogenous public signal at all (Angeletos, Hellwig, and Pavan, 2007), and observed participation is itself an endogenous public signal, which is what Lohmann's Leipzig series in fact documents. Feeds make individual actions more visible, not less; what they obscure is the denominator, who else is seeing what I see. The honest statement is that the mechanism's sign is theoretically ambiguous and A8 takes a stand on it. Third, the market's own prices are public signals, and crashes have served as focal events; true, and Section~\ref{sec:adaptive} is the story of the state reading that signal faster than any crowd, the race of Figure~\ref{fig:record}b. Fourth, the 2019--24 wave, Hong Kong, Chile, Lebanon, Iraq, Sri Lanka, Bangladesh, is the natural stress test, and it delivered the signature the stand predicts: enormous synchronized sentiment; episodic mass presence where physical squares briefly reconstituted the public signal; regime change in exactly the two cases the jam indicator of Section~\ref{sec:predictions} scores as jammed, Sri Lanka in 2022 and Bangladesh in 2024; persistence wherever the dial could still turn, France's ten-billion-euro answer to the roundabouts of 2018--19 being the cleanest example; and no E2 anywhere. Lebanon, where window, square, and jam coincided without regime change, records that the conjunction is necessary in the data, not sufficient; the regime's raw strength $\theta$ is its own variable, and the model never promised sufficiency.
\end{remark}

\begin{figure}[t]
\centering
\includegraphics[width=0.9\textwidth]{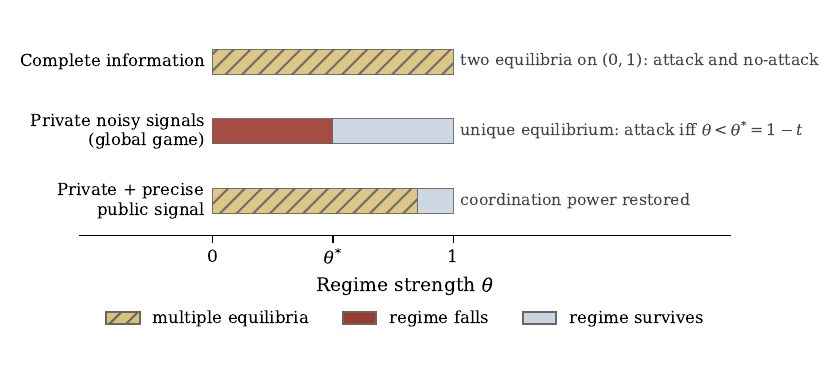}
\caption{\small \textbf{Three information regimes of the regime-change game.} Complete information admits both equilibria on the interior. Private noise selects the unique equilibrium with attack only below $\theta^{*}=1-t$ (drawn at $t=0.5$). A precise public signal restores coordination power. The passage from broadcast media and squares to personalized feeds is, in this model, a move from the third row to the second. The schematic depicts the static benchmark; the dynamic caveats are in Remark~\ref{r:scope8}.}
\label{fig:coord}
\end{figure}

\section{The Political Layer}\label{sec:political}

\begin{proposition}[The redistribution dial]\label{p:dial}
Under A10, in the elite--mass game where the elite chooses $\theta$ and the mass's outside option is the revolt whose value Propositions~\ref{p:part} and~\ref{p:jwindow} determine, the subgame-perfect equilibrium sets $\theta^{*}$ exactly at the binding participation constraint: revolution is never observed on the equilibrium path, and its threat disciplines $\theta$ continuously (Acemoglu and Robinson, 2000). Because the threat is transient, by Proposition~\ref{p:jwindow}(iii), promises made inside the window are not credible outside it; commitment is achieved by institutionalization, the franchise, constitutional social rights, automatic stabilizers (Acemoglu and Robinson, 2006). The welfare state is a commitment technology.
\end{proposition}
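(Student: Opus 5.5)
The plan is to write down a stripped-down two-stage version of the Acemoglu--Robinson (2000) elite--mass game and solve it by backward induction. In stage one the elite picks $\theta\in\Theta$. By A10 every $\theta$ in the relevant range leaves the market economy in $\mathcal{V}$, so the only payoff-relevant effect of $\theta$ is the transfer it delivers to the mass. The elite's payoff, $U_E(\theta)$, is strictly decreasing in $\theta$ as long as no revolt occurs, and it falls to a low value $\underline U_E$ if a revolt happens. In stage two the mass compares two options. One is the status-quo value $U_M(\theta)$, which I take to be continuous and strictly increasing in $\theta$. The other is the revolt value. Both are coded relative to the reference point, so I set
\[
\Phi(\theta,z)=\beta_p\delta_p^{\tau}w(p)G^{\alpha}-\lambda\bigl[(|z|+L(\theta)+\kappa)^{\beta}-|z|^{\beta}\bigr].
\]
This is the gap-$z$ participation comparison of Proposition~\ref{p:jwindow}(i). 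Redistribution enters through the stake: $L(\theta)$ is increasing in $\theta$, in the spirit of Corollary~\ref{c:stakes}. The mass revolts if and only if $\Phi(\theta,z_t)>0$.

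The first key step is to show that, for each gap $z$, the set of values of $\theta$ at which revolt is unprofitable is an upper interval $[\theta^{*}(z),\infty)\cap\Theta$. This follows from monotonicity of $\Phi$ in $L$: the bracket $(|z|+L+\kappa)^{\beta}-|z|^{\beta}$ is increasing in $L$, so $\Phi$ is decreasing in $L$, and hence decreasing in $\theta$ through $L(\theta)$. The cutoff is $\theta^{*}(z)=\min\{\theta:\Phi(\theta,z)\le 0\}$, which exists by continuity. It is the lowest point of $\Theta$ whenever $|z|\le\varepsilon$, by Proposition~\ref{p:jwindow}(i). The second step is backward induction. Whenever $\underline U_E<U_E(\theta^{*}(z))$, the elite strictly prefers conceding $\theta^{*}(z)$ to triggering revolt. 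Because $U_E$ is decreasing, it will not concede any more than that, so $\theta^{*}$ sits exactly where the participation constraint binds, and revolt is off the equilibrium path. The tie-break at $\Phi=0$ (the mass abstains when indifferent) is the standard device that makes the supremum attained. The continuous discipline claim then comes from the dependence of $\theta^{*}$ on $z_t$. By Proposition~\ref{p:jwindow}(ii)--(iii), $|z_t|$ jumps at the break and then decays geometrically, so $\theta^{*}(z_t)$ rises and then falls back to its baseline after $k^{*}$ periods.

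The second half of the statement, that promises made inside the window are not credible, I would obtain by making the game repeated without commitment. Suppose the elite promises $\theta^{*}(z_{t_0})$ at $t_0$ in exchange for abstention. At any later date $s>t_0+k^{*}$ we have $|z_s|\le\varepsilon$, so $\theta^{*}(z_s)$ is the baseline value. A sequentially rational elite therefore reverts, since the revolt threat backing the promise has expired. The mass anticipates this reversion, so a promise alone cannot buy abstention inside the window. Institutionalization I would model, as in Acemoglu and Robinson (2006), as a costly action that shifts who sets future $\theta$, for example a franchise extension that hands the choice of $\theta$ to the median mass voter, or automatic rules that fix $\theta$ as a function of the state. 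The result to prove is that if this action is available and the revolt payoff inside the window exceeds the present value of the future concessions the elite gives up, the elite prefers it to revolt. This is a standard comparison of discounted sums.

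The main obstacle is the first step, not the game theory. Three things are genuinely modelling choices rather than consequences of A6, A7 and A10: how $\theta$ enters the revolt payoff, the monotonicity of $L(\theta)$, and the boundedness of $\Theta$, which is needed so that $\theta^{*}(z)$ is attainable at large gaps. If $\Theta$ is too small to satisfy the constraint at a deep break, revolution does occur on the path. I would therefore add the explicit hypothesis that $\Phi(\sup\Theta,z_{t_0})\le 0$, read as "the dial is operable". Its failure is exactly the "inoperable dial" leg flagged after Corollary~\ref{c:share}, and I would record that as a scope condition in the proof.
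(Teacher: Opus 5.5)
Your first half is the paper's own argument with the scaffolding written out. The paper's proof is a one-line backward induction: if the constraint is slack the elite lowers $\theta$, if it is violated revolt follows, so the optimum binds. Your monotonicity of $\Phi$ in $L(\theta)$ and the indifference tie-break supply what that line leaves implicit. One internal inconsistency needs repair. In your model $\theta^{*}(z)=\min\Theta$ with $\Phi<0$ whenever $|z|$ is below the threshold computed at $L(\min\Theta)$. The constraint therefore binds only when $\theta^{*}(z)$ is interior, i.e.\ inside a window. Either state the result as ``binding when interior, at the floor otherwise,'' or assume, as the paper's proof tacitly does, that lowering $\theta$ always eventually makes the constraint bind.

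The genuine gap is in the commitment half. In your stage game the mass's decision at $t$ depends on policy only through the current dial, via $L(\theta_t)$. Promises about future $\theta$ therefore never enter $\Phi$: they are irrelevant, not merely non-credible. So ``a promise alone cannot buy abstention'' holds for that trivial reason, not by sequential rationality after $t_0+k^{*}$.

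Institutionalization then has no work to do. Under your operability hypothesis $\Phi(\sup\Theta,z_{t_0})\le 0$, which covers the whole window since $\Phi$ increases in $|z|$, the elite buys abstention period by period with $\theta^{*}(z_t)$. It then reverts safely once $|z_t|\le\varepsilon$, which is cheaper than any franchise or rule that fixes $\theta$ high for good. If the hypothesis fails, a franchise cannot push $\theta$ above $\sup\Theta$ either. So the comparison you propose, institutionalization versus revolt, is the wrong one: the elite's relevant alternative is discretionary concession, and in your model it wins.

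The missing ingredient is the one that drives Acemoglu and Robinson (2000). The paper imports it wholesale when it says that transferring decision rights ``converts the promise into a state variable.'' The stake must be forward-looking, e.g.\ $L_t=\sum_{s\ge t}\rho^{\,s-t}\ell(\theta_s^{e})$ for a discount factor $\rho$, with revolt irreversible and the per-period dial bounded. A pension or a mortgage is a claim on future flows, which is how Corollary~\ref{c:stakes} reads it.

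With that stake, a concession inside the window raises $L_t$ only by its own flow plus a continuation that reverts to the floor after $t_0+k^{*}$. The discretionary dial can then fail at a deep break where a committed path would succeed. That is where non-credibility bites and transferring decision rights becomes necessary rather than merely available. Your step one must then be redone as a Markov-perfect equilibrium, solved backward from the end of the window, because $L_t$ depends on the equilibrium policy path. Your hypothesis ``the dial is operable'' should be restated as operability under commitment, which is what keeps revolution off the path exactly in the cases where discretion fails.
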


The empirical companion is direct: across Europe, 1820--1938, franchise extensions track revolution-threat indicators (Aidt and Jensen, 2014), and the prototype is Bismarck, who purchased the 1880s with social insurance while banning the party he was outbidding. The magnitude of the dial in modern data is the content of Figure~\ref{fig:gini}. In 2019, on total-population OECD figures, the United States, the United Kingdom, and Finland ran market-income Ginis of 0.505, 0.508, and 0.512, and disposable-income Ginis of 0.395, 0.366, and 0.273. One caution before the punchline: market Ginis on total populations are shaped by pension architecture and employment composition, since pensioners with near-zero market income inflate the statistic, a point Mogstad, Salvanes, and Torsvik (2025) press about Nordic comparisons in particular; the near-identity of the market numbers is a fact about the statistic, not proof of identical pre-tax economies, and we do not lean on it. What we lean on survives the caution untouched: twelve Gini points of disposable-income range inside one institutional family. Whatever else this establishes, it settles one definitional question: inequality at the American level is a policy setting within the system, not a property of it.

\begin{remark}[What binds on path, and what does not]\label{r:tension}
A tension in Proposition~\ref{p:dial} deserves daylight rather than a footnote. If the elite always sets the dial exactly at the binding constraint, revolution is off the equilibrium path, and a reader may conclude that the behavioral and coordination layers of Sections~\ref{sec:agent} and~\ref{sec:coord} do no equilibrium work. Half of that conclusion is wrong: the layers price the constraint, so $\lambda$, $\gamma$, and the derived $\varepsilon$ determine where the dial sits and how it moves, and the on-path observable is the dial's response to threat, which is prediction P7 and which Aidt and Jensen (2014) measure. The other half is right, and we concede it: P3, a claim about observed uprisings, cannot be an equilibrium prediction of a model in which uprisings never occur. Part of the repair already exists in the literature: Acemoglu and Robinson's dynamic model (2001) generates on-path transitions through transient windows and commitment failure, with revolt cheapest in recessions, a mechanism close in spirit to Proposition~\ref{p:jwindow}. What remains open, and what we propose, is elite uncertainty about the behavioral parameters specifically, the adaptation speed, the threshold, the current reference, under which the dial-setter faces an inference problem and revolts occur on path with positive probability. That model is the natural sequel to this paper, and it is not attempted here.
\end{remark}

\begin{figure}[t]
\centering
\includegraphics[width=0.82\textwidth]{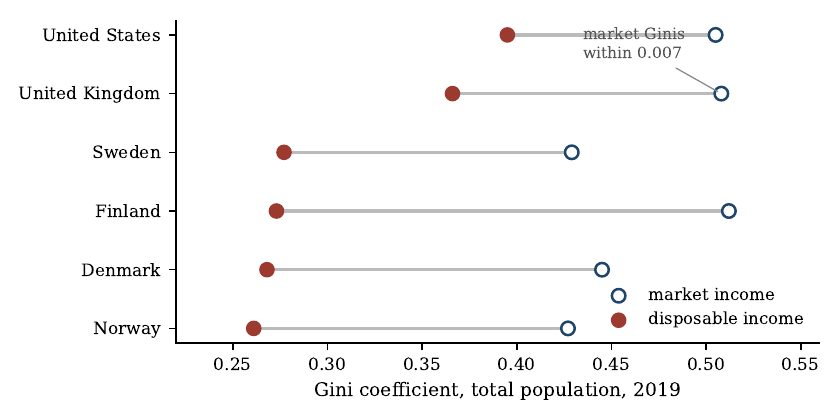}
\caption{\small \textbf{The dial is real and it is large.} Gini coefficients of market income (open) and disposable income (filled), total population, 2019, OECD Income Distribution Database as compiled by Mogstad, Salvanes, and Torsvik (2025). The United States, the United Kingdom, and Finland enter with market Ginis within 0.007 of one another and exit 12 points apart. Market Ginis on total populations are shaped by pension and employment composition; the paper's claim rests on the disposable spread, not on the market near-identity.}
\label{fig:gini}
\end{figure}

\begin{corollary}[Maintenance, not refutation]\label{c:piketty}
If $\theta$ is held fixed while the return on capital exceeds growth, the wealth share rises endogenously (Piketty, 2014). Within the model this is not a refutation of persistence; it is the control loop's maintenance manual. The dial does not hold its setting by itself; it is held there by political work, and Corollary~\ref{c:jam} is what happens when the work stops.
\end{corollary}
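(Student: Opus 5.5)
The corollary bundles two claims. One is a formal implication: with $\theta$ fixed and $r>g$, the wealth share rises. The other is an interpretive reclassification, that this is maintenance rather than refutation. I will treat them in that order.

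\textbf{Step one: the implication.} I would write down a minimal Piketty-style accumulation law. Let $W_t$ be elite wealth and $Y_t$ aggregate income. The dial enters as an effective tax-and-transfer rate $\theta$ on capital income. Elite wealth then evolves as $W_{t+1}=(1+r(1-\theta)s)W_t$, where $s$ is the elite saving share out of net capital income, while income grows as $Y_{t+1}=(1+g)Y_t$. The wealth-to-income ratio $\omega_t=W_t/Y_t$ therefore obeys
\[
\omega_{t+1}=\frac{1+r(1-\theta)s}{1+g}\,\omega_t .
\]
It rises geometrically whenever $r(1-\theta)s>g$. Under the corollary's hypothesis $r>g$, this condition holds for all $\theta$ below a critical value $\bar\theta=1-g/(rs)$, provided saving out of capital income is high enough. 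The wealth share, a monotone transform of $\omega_t$, then rises endogenously. This is a one-line induction. The care goes into stating the inequality as $r(1-\theta)s>g$ rather than bare $r>g$: that is exactly where the dial lives, and it is honest about the scope of Piketty's claim.

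\textbf{Step two: non-refutation.} Here I would invoke A10 together with Proposition~\ref{p:dial}. The rising share raises the mass's stake deficit: $L$ falls for the base and the collective gain $G$ rises. Through Proposition~\ref{p:part} and Proposition~\ref{p:jwindow}(i), this moves the derived threshold $\varepsilon$ downward and loosens the participation constraint. Proposition~\ref{p:dial} then says the subgame-perfect elite re-sets $\theta$ to the new binding level. Since A10 guarantees the economy stays in $\mathcal{V}$ over a wide range of $\theta$, the adjustment is available without exit. So the drift produces neither an E1 nor an E2 event, only a predicted comparative static in $\theta^{*}$.

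\textbf{Main obstacle.} The hard step is making precise the conditional ``when the work stops''. The corollary's sentence implicitly asserts that persistence depends on the dial actually being turned. So I would state the result as a dichotomy: either $\theta$ tracks $\theta^{*}_t$ and the constraint never binds on path, or $\theta$ is frozen and the constraint eventually fails inside some window. The second horn is deferred to Corollary~\ref{c:jam}. Proving that the frozen case actually generates a window needs the accumulation in step one to deliver a break of size exceeding $\varepsilon$ in some period, or a fall in $\varepsilon$ itself. A smooth geometric drift accrues below threshold per period by Proposition~\ref{p:jwindow}(iii), so I would route the argument through the threshold falling (via $L$ and $G$) rather than through a consumption break. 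That is the step I expect to need the most care.
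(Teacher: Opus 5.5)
The paper gives no proof of this corollary. Its first sentence is imported from Piketty (2014), and the rest is an interpretive reading of A10 and Proposition~\ref{p:dial}. The frozen-dial case is simply handed to Corollary~\ref{c:jam}, which is itself asserted from the historical record rather than derived. Your step two is that same reading, so your route is correct and follows the paper's where the paper has one.

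What you add is the accumulation law, and it earns its place:
\begin{itemize}
\item It shows the operative inequality is $r(1-\theta)s>g$, not bare $r>g$. The corollary's hypothesis therefore bites only for $\theta<\bar\theta=1-g/(rs)$, and only if $s>g/r$.
\item It shows that one fixed setting, any $\theta\geq\bar\theta$, halts the drift. This sharpens your use of A10, which promises viability over a ``wide'' range of $\theta$, not over all of $\Theta$. ``Available without exit'' is a checkable condition, that the required setting lies inside that range, not something A10 delivers outright.
\end{itemize}

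Several joints are loose.
\begin{itemize}
\item A rising elite share does not imply that $L$ falls for the base. The mass can gain absolutely while losing share, and in the paper $L$ is what the dial raises (Corollary~\ref{c:stakes}). You do not need it: $\bar z$ is decreasing in $w(p)G^{\alpha}$, so $G$ alone carries the direction.
\item With $r$ fixed your ratio cannot rise forever. Since $rW_t$ is part of $Y_t$, you have $\omega_t\leq 1/r$. Read ``rises'' as a transitional claim valid while $r(1-\theta)s>g$, or let $r$ adjust as Piketty does. That is all the corollary asserts.
\item In your dichotomy, the constraint \emph{does} bind on path in Proposition~\ref{p:dial}. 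What never happens is its violation.
\end{itemize}

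Your ``main obstacle'' lies outside this statement. If you pursue it, the threshold route does not yield a window in the sense of Proposition~\ref{p:jwindow}. As $G$ grows, $\bar z$ falls to zero, and then the adapted-point failure hypothesis of part (i) itself breaks, so joining is worthwhile at $z=0$. That is a standing violation that adaptation cannot close, because A7 moves $z$ and not $G$. It is the sharper reason a frozen dial routes to Corollary~\ref{c:jam}.

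Nor should you lean on part (iii) for smooth declines. Under A7 a steady per-period fall $d$ leaves a permanent gap $|z_t|\to d/\gamma$. So falls below $\varepsilon$ per period still open a window once $d>\gamma\varepsilon$.
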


\begin{corollary}[Dial jams]\label{c:jam}
The historical cases where A10 fails, autocratic refusal in Russia through 1917, the deflationary vise of Germany 1929--33, are the cases where the model predicts instability rather than persistence. The prediction is directionless: a jammed dial opens onto 1917 and onto 1933. What the model forecasts at a jam is regime change, not its sign.
\end{corollary}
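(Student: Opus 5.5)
The statement is a conditional claim: when A10 fails, the dial-equilibrium of Proposition~\ref{p:dial} no longer exists, the model predicts instability, and it makes no prediction about the direction of that instability. I will prove it by contraposition against the persistence mechanism of Proposition~\ref{p:dial}, in three steps.

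The first step formalizes a jam. I define it as a restriction of the feasible dial set to $\Theta_J\subsetneq\Theta$. Either the elite refuses to move $\theta$ (autocratic refusal), or the budget and monetary constraints pin $\theta$ below what the binding participation constraint requires (the deflationary vise). In both cases the elite cannot reach $\theta^{*}$. Here $\theta^{*}$ is the value at which Proposition~\ref{p:part}'s inequality holds with equality, evaluated at the current reference gap.

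The second step shows that a jam produces on-path regime change. I combine the jam with Proposition~\ref{p:jwindow}. Take a qualifying break $\Delta>\varepsilon$. Part (ii) of that proposition says the break opens a window, so revolt becomes individually rational on $k\le k^{*}$. Because $\theta\in\Theta_J$ cannot be raised to re-bind the constraint, the subgame-perfect argument of Proposition~\ref{p:dial} fails. The elite's best response no longer deters, so attack is no longer off the equilibrium path. I then need that a positive measure of such breaks occurs. Corollary~\ref{c:share} supplies this: windows are open a fraction $1-e^{-\mu k^{*}}>0$ of the time. Hence a jammed dial is visited by an open window with positive long-run frequency, and the persistence conclusion fails. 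That failure is what the statement calls ``instability.''

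The third step establishes directionlessness, and I expect it to be the main obstacle, because ``no sign'' is a statement about what the model does \emph{not} determine. My approach is to show that every primitive entering the argument is symmetric in the successor regime. The participation constraint, the window length $k^{*}$, and the threshold $\bar z$ depend only on $(\lambda,\beta,\gamma,L,\kappa,w(p),G)$. None of these encodes the political content of the replacement. The coordination layer of Proposition~\ref{p:coord} likewise fixes only whether attack succeeds, not what the attackers install.

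From this I conclude that for any admissible payoff $G$ attached to a left successor, the same equilibrium computations go through with that $G$ attached to a right successor. So the model's equilibrium set contains both outcomes, and I would record 1917 and 1933 as the two realizations.

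One honest caveat remains: this is a non-identification argument rather than a theorem with a unique conclusion. I would state it as such, as the paper does for its imported propositions.
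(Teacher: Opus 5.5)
The paper gives no standalone proof of this corollary. It uses the result as the complement of Proposition~\ref{p:dial}: the proof of Theorem~\ref{t:main} glosses it as ``a jammed dial; otherwise Proposition~\ref{p:dial} clears the constraint.'' The sign is left open because no primitive of the model encodes the successor's content. Your first and third steps make exactly that reasoning explicit, and they are fine.

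The gap is in your second step, at ``the elite's best response no longer deters, so attack is no longer off the equilibrium path.'' That inference holds only inside the two-player game of Proposition~\ref{p:dial}, where the mass's revolt is a unitary outside option. Propositions~\ref{p:part} and~\ref{p:jwindow} give individual rationality with $p$ taken as exogenous. Whether attack is actually played is settled in the coordination game of Proposition~\ref{p:coord}. With private signals only, part (ii) there gives a unique no-attack equilibrium outside extreme fundamentals, ``however high the grievance.'' You concede in your third step that the coordination layer fixes whether attack succeeds, but your second step never checks it.

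The paper is explicit that a jam is a necessary conjunct, not a sufficient one. The proof of Theorem~\ref{t:main} lists agent and window, public-signal architecture, jam, and feasible path as separate requirements. Remark~\ref{r:scope8} records Lebanon 2020 as window, square, and jam without regime change. So in the assembled model, jam plus an open window does not yield on-path regime change.

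You can repair this in either of two ways:
\begin{itemize}
\item Weaken the conclusion to what the corollary needs, which is also what Corollary~\ref{c:piketty} and F4 mean by ``instability.'' The only mechanism that keeps revolt off path is gone, so persistence is no longer predicted. This follows from the failure of A10 alone.
\item For the two named cases, add the coordination conjunct explicitly. Russia 1917 and Germany 1929--33 predate curated feeds and had public-signal generators, so Proposition~\ref{p:coord}(iii) restores an attack equilibrium. That gives multiplicity, not uniqueness, which is why ``instability'' rather than certain collapse is the right word.
\end{itemize}

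Either way, your appeal to Corollary~\ref{c:share} becomes unnecessary. As written, it also needs the jam to be a standing condition or independent of breaks. In both named cases the jam is already in force when the break arrives, and in Germany the same depression that opened the window tightened the deflationary vise.
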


\section{Reinforcing Constraints}\label{sec:reinforce}

Three further mechanisms close escape routes that the first four leave notionally open. We state them briefly; proofs and sources are in the appendix.

\begin{proposition}[Conservation of status]\label{p:status}
(i) In one-dimensional costly-signaling games satisfying single crossing, separating equilibria exist and signaling expenditure is informative; the structure is isomorphic across the economic case (Spence, 1973) and the biological case (Zahavi, 1975), with evolutionary stability in the sense of Maynard Smith and Price (1973) established by Grafen (1990). (ii) Status is rank, and for any rank valuation $s(\cdot)$, $\sum_i s(\mathrm{rank}_i)$ is permutation-invariant: policy changes the carrier of rank, never its aggregate scarcity (Hirsch, 1976; Frank, 1985).
\end{proposition}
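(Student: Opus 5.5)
Part (ii) is elementary and I would prove it first. Fix $n$ agents and a rank valuation $s:\{1,\dots,n\}\to\mathbb{R}$. Assume that rank is strict, or that ties are broken by a fixed rule. Then any profile of the status carrier (income, credentials, consumption signals) induces a bijection $\mathrm{rank}:\{1,\dots,n\}\to\{1,\dots,n\}$. Reindexing the sum by rank gives
\[
\sum_i s(\mathrm{rank}_i)=\sum_{j=1}^{n}s(j),
\]
which does not depend on the profile at all. A policy maps one profile to another, possibly changing which variable is ranked. At most it changes which bijection is realized, so the aggregate is invariant. For ties, I would adopt the average-rank convention, assigning each tied block the mean of $s$ over the ranks it occupies. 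The sum is then preserved block by block, and the claim survives. The economic gloss is that policy ``changes the carrier, never the aggregate scarcity,'' which follows because the invariance holds for every carrier variable. This is the Hirsch--Frank positional-good point, and I would cite it rather than elaborate.

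For part (i), I would not reprove the literature. I would instead verify the standard Spence construction in a one-dimensional setting and then note the isomorphism. Take a sender type $\theta\in[\underline\theta,\bar\theta]$ with continuous density and a signal $e\ge 0$. The sender's payoff is $U(e,b,\theta)$, where $b$ is the receiver's belief or response. Single crossing means $U_e/U_b$ is strictly increasing in $\theta$, so higher types find signaling cheaper at the margin. The steps are as follows. First, restrict to monotone strategies, so that $e(\theta)$ is strictly increasing and hence invertible; under separation, beliefs are then $b=\theta$. Second, write the local incentive constraint as the ODE
\[
U_e(e,\theta,\theta)\,e'(\theta)+U_b(e,\theta,\theta)=0,
\]
with initial condition $e(\underline\theta)$ equal to the lowest type's full-information optimum (the Riley outcome). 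Third, use single crossing to show that local incentive compatibility implies global incentive compatibility, by the usual argument that $\partial U(e(\hat\theta),\hat\theta,\theta)/\partial\hat\theta$ has the sign of $\theta-\hat\theta$. Fourth, support off-path signals with pessimistic beliefs. Informativeness is then immediate, since $e$ is invertible. Existence of the solution to the ODE needs a Lipschitz condition, which I would add as a mild regularity assumption.

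The isomorphism with Zahavi's handicap consists of a relabelling. Type becomes quality, the signal becomes the ornament, and cost single crossing becomes the condition that higher-quality animals pay a lower marginal fitness cost. For evolutionary stability I would invoke Grafen (1990) directly. His result is that the separating strategy is a Nash equilibrium of the population game whose deviations do strictly worse, which yields the ESS. I would state this as an import, consistent with the paper's labeling convention.

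I expect the main obstacle to be scope rather than mathematics. Part (i) needs the regularity conditions stated explicitly: compact type space, smooth $U$, and $U_b>0$. Part (ii) requires pinning down the tie convention and fixing $n$. Allowing population size to vary under policy would break invariance, so I would make the fixed-$n$ assumption explicit in the proof.
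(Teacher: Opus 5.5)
Your proposal is correct and takes the paper's route: part (ii) is exactly the paper's reindexing argument, rank as a permutation so that $\sum_i s(\mathrm{rank}_i)=\sum_{j=1}^n s(j)$. For part (i) the paper, as you do, imports the result from Spence, Zahavi, and Grafen, noting only that single crossing makes the high type's signal cheaper at the margin. Your additions are sound and make explicit scope conditions the paper leaves implicit: the separating ODE with its local-to-global incentive check, the mean-of-$s$ tie convention, and the fixed-$n$ caveat.
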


The demand for display is old (Veblen, 1899). Abolishing the price of status does not abolish status; it reroutes it, historically to party rank, metropolitan residence permits, and a black market in denim. The survey evidence that half of respondents prefer absolute poverty with higher rank (Solnick and Hemenway, 1998) says the demand side survives any supply-side reform.

\begin{proposition}[Feasible transitions only]\label{p:hotswap}
Under A9, every transformation that stays in $\mathcal{V}$ is a local-modular path, and shutting the system down for a rebuild is E1 by definition.
\end{proposition}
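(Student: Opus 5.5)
The statement is essentially definitional. I would prove it by unpacking the definitions of \emph{feasible path} and of collapse form E1, and then checking two set-theoretic inclusions under A9. I would first formalize the objects. A \emph{transformation} is a finite sequence $x_0\to x_1\to\dots\to x_T$ in $\mathcal{X}$. It \emph{stays in} $\mathcal{V}$ if every $x_t$ lies in $\mathcal{V}$. A \emph{local module change} is a step altering only a bounded set of modules of the dependency network, with the remaining modules held at their current values. A \emph{rebuild} is any transformation containing a step that is not local, namely a simultaneous reconfiguration of the whole network. By A9, this requires halting the loop.

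The first claim is that every transformation staying in $\mathcal{V}$ is local-modular. I would argue by contradiction on the first non-local step $x_s\to x_{s+1}$. A9 asserts that global reconfiguration is possible only along feasible paths. A9 also asserts that halting the system is immediate exit from $\mathcal{V}$. So a non-local step would require some intermediate state outside $\mathcal{V}$, which contradicts the hypothesis. Induction over $s=0,\dots,T-1$ then shows that every step is local, and so the path is feasible in the sense of Section~\ref{sec:model}.

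The second claim is that a shutdown for a rebuild is E1. Here I would check the two clauses of Definition~\ref{d:collapse}(E1). The exit from $\mathcal{V}$ is immediate by A9. For permanence, note that while halted there is no functioning coordination mechanism, since the old loop is stopped and the new one is not yet running. Any return to $\mathcal{V}$ would itself have to be a transformation ending in $\mathcal{V}$. By the first claim, reconfiguration is only achievable along feasible paths, which start inside $\mathcal{V}$, so no feasible path begins from the halted state. Hence $x_t\notin\mathcal{V}$ for all $t\ge T$, and this is E1.

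The main obstacle is the permanence clause. As Remark~\ref{r:op} concedes for the Soviet case, a halted system might conceivably restart. The argument needs A9 to be read as saying that re-entry into $\mathcal{V}$ from outside is impossible, not merely costly. If that reading is too strong, I would fall back to an operational version: the excursion lacks finite return time, so it is not a crisis in the sense of Definition~\ref{d:collapse}, and it is therefore absorbing exit by the definition's dichotomy. I would state explicitly that this is the content of A9's phrase ``immediate exit.''
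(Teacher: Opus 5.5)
Your proposal takes essentially the paper's approach: the paper proves this in one line as immediate from Definition~\ref{d:collapse} and A9. Your first claim needs only A9's clause that global reconfiguration runs only along feasible paths, so the detour through the halting clause, and the premise you add that every non-local step requires a halt, can be dropped. For the second claim, your primary permanence argument overreaches: A9 restricts global reconfiguration rather than a restart, and the feasible-path definition constrains only intermediate states. What actually carries that step is your fallback, reading A9's ``immediate exit'' as the absorbing exit of Definition~\ref{d:collapse}, which is exactly the reading behind the paper's ``by definition.''
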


\begin{remark}[Fisher's geometry, as analogy]\label{r:fisher}
In Fisher's (1930) geometric model of adaptation, a random step of size $r$ taken at distance $d$ from the optimum in an $n$-dimensional trait space improves fitness with probability approximately $1-\Phi\!\big(r\sqrt{n}/(2d)\big)$, which vanishes as dimension grows for any fixed step-to-distance ratio; only small steps keep non-negligible odds (Orr, 1998). We import this as an analogy, not as a theorem about societies. What it disciplines is a habit of mind: in tightly coupled high-dimensional systems, the large random redesign is overwhelmingly likely to land below the incumbent, and the incremental step is the same geometry played to win. The revolutionary leap and the boring reform are bets on opposite ends of one distribution.
\end{remark}

Path stickiness is independently documented: the QWERTY keyboard (David, 1985) and increasing-returns lock-in (Arthur, 1989) show that history binds the feasible set even when no one is defending the incumbent. The corollary is the cleanest natural experiment the twentieth century ran, four times. Socialisms that reinstalled the price module survived, NEP 1921, Yugoslav market socialism, China 1978, Vietnam 1986; socialisms that did not, queued. Four cases, four reinstallations, one degree of freedom.

\begin{proposition}[The commons and the decarbonization margin]\label{p:commons}
(i) In an unpriced open-access resource, free entry proceeds until average product equals cost: overuse is the equilibrium of missing prices, not of private property (Gordon, 1954; Hardin, 1968). The Aral Sea was not privatized to death; it was unpriced to death, which is why abolishing private ownership universalizes rather than internalizes the externality. (ii) Small-scale, monitorable, repeated settings support self-governance, and Ostrom's (1990) design principles read as the field version of folk-theorem preconditions (laboratory confirmation in Ostrom, Walker, and Gardner, 1992); the same scale conditions say the solution does not lift to the planetary case. (iii) A price extends to an externality only if it covers it: carbon pricing is the missing line added to Lemma~\ref{l:prices}'s telegram, and Montreal 1987 is the proof of concept that a price-plus-rule mix can retire a global commons problem. (iv) Writing emissions as $E=Y\cdot I$ (the compressed Kaya identity; Kaya, 1990), the two margins are shrinking $Y$ and crashing $I$. Technology costs fall as a power law in cumulative production (Wright, 1936; validation in Nagy et al., 2013, and Farmer and Lafond, 2016), and the engine of cumulation is market scale. Figure~\ref{fig:solar} shows the flagship case: solar modules fell from \$106 per watt at 0.3 megawatts cumulative in 1976 to roughly \$0.11 at 1.6 terawatts in 2024, a decline of about 25 percent per doubling.
\end{proposition}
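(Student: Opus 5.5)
The proposition has four parts, and I will treat each as a short derivation from a standard model. I take them in order (i), (iii), (iv), (ii), because (ii) is the only part that is a reading of results rather than a derivation.

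For (i) I use the Gordon open-access model. Let $H$ be aggregate effort, $Y(H)$ a concave harvest function with $Y(0)=0$, and $c$ the constant unit cost of effort. Each entrant receives the average product $Y(H)/H$, so free entry continues until $Y(H^{oa})/H^{oa}=c$. A social planner, or equivalently a sole owner charging a rent, instead sets marginal product equal to cost, $Y'(H^{*})=c$. Strict concavity gives $Y'(H)<Y(H)/H$ for $H>0$, so $H^{oa}>H^{*}$. The wedge $Y(H)/H-Y'(H)$ is exactly the missing price per unit of effort, the scarcity rent. The claim about ownership then falls out of the setup: the inefficiency depends only on access being unpriced, not on who holds title. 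A state that abolishes private title without charging the rent leaves $H^{oa}$ in place and extends it to every resource. This is how I would read the Aral Sea remark.

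For (iii) I argue by internalization. Add a damage term $D(E)$ to the planner's objective. The decentralized first-order conditions reproduce the planner's only if the price of emissions equals marginal damage, $\tau=D'(E^{*})$. So a price covers the externality exactly when it prices the missing commodity. This is Lemma~\ref{l:prices}'s message space with one dimension added, and it restores the classical conditions under which that lemma applies. The Montreal Protocol I cite only as an existence example, not as something proved.

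For (iv) I first write $E=Y\cdot I$, so $\dot E/E=\dot Y/Y+\dot I/I$. Next I take Wright's law, $C(X)=C_0X^{-b}$, where $X$ is cumulative production. A doubling of $X$ multiplies cost by $2^{-b}$, so the learning rate is $1-2^{-b}$. The two data points give
\[
b=\frac{\ln(106/0.11)}{\ln(1.6\times10^{6}/0.3)}\approx\frac{6.87}{15.5}\approx0.44,
\]
which implies $2^{-b}\approx0.74$, a learning rate of roughly 25 percent. Since $X$ grows with market scale, cost decline speeds up with the size of the market. That is the sense in which the price mechanism drives the crash in $I$.

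Part (ii) I will not claim to prove. I will present it as a mapping: Ostrom's design principles (monitoring, graduated sanctions, bounded membership) supply the observability and patience conditions of the folk theorem. Under imperfect public monitoring the sustainable discount threshold rises as group size grows and per-member monitoring precision falls, so cooperation fails at planetary scale. The hard step is turning ``does not lift'' into a real statement. I would state a conditional version: for fixed per-member signal precision, the minimal discount factor supporting cooperation tends to one as $n\to\infty$. I would cite the Ostrom, Walker, and Gardner laboratory evidence as support rather than as proof.
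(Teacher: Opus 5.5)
Your treatment of (i), (iii) and (iv) follows the paper's own proof and adds detail to it. That proof consists only of the Gordon (1954) citation for (i), the remark that (iii) is definitional given the scope of Lemma~\ref{l:prices}, and, for (iv), Wright's law taken as the empirical regularity $\log c=a-b\log Q$ across the Nagy et al.\ (2013) panel.

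\begin{itemize}
\item In (i), the strict-concavity comparison $Y'(H)<Y(H)/H$, giving $H^{oa}>H^{*}$, is the standard rent-dissipation argument.
\item In (iii), $\tau=D'(E^{*})$ is the right formal reading of a price that ``covers'' the externality, provided the planner's problem is convex so that first-order conditions suffice. Your further gloss, that the added price line restores the classical conditions of Lemma~\ref{l:prices}, is more than the statement needs. It is also not true in general: treating a detrimental externality as a priced commodity can itself introduce fundamental nonconvexities (Starrett, 1972).
\item In (iv), your endpoint calculation, $b\approx 0.44$ and $1-2^{-b}\approx 0.26$, actually checks the ``about 25 percent'' figure, which the paper asserts without computing.
\end{itemize}

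The paper gives no argument for (ii), so presenting it as an interpretive mapping matches the source. The conditional version you propose, however, would fail as stated.

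\begin{itemize}
\item \emph{Why it fails.} Suppose each member's action is monitored with fixed precision. Then a deviation shifts that member's signal by a non-vanishing amount. Targeted sanctions, which are Ostrom's own graduated-sanctions principle, then give each member an incentive constraint that does not depend on $n$, so the minimal discount factor need not tend to one.
\item \emph{Where the scale obstruction comes from.} It comes from the opposite monitoring structure. Only an aggregate signal is observed (the resource stock, the atmospheric concentration), its noise does not shrink with $n$, and so each member's influence on its distribution vanishes. In that case equilibrium behavior becomes approximately myopic as $n$ grows, for any fixed discount factor below one (Green, 1980; Sabourian, 1990; Fudenberg, Levine, and Pesendorfer, 1998).
\item \emph{The only other route.} With fixed per-member precision, you recover $n$-dependence only by restricting to collective punishments. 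Their per-period false-alarm probability, $1-(1-\eta)^{n}$ where $\eta$ is the per-member false-flag rate, tends to one.
\end{itemize}

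If you want a formal statement behind ``does not lift,'' state it under one of those two hypotheses. The planetary application then rests on a claim about the monitoring and sanction structure, not on $n$ alone.
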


\begin{corollary}[The degrowth self-block]\label{c:degrowth}
The 1.5-degree pathway requires emission cuts of about 7.6 percent per year (UNEP, 2019). In 2020 the near-shutdown of the world economy delivered a fall of about 5.4 percent. The $Y$-margin therefore demands a 2020-scale contraction, made permanent and repeated annually, which by Proposition~\ref{p:jwindow} holds $z_t$ permanently negative and fires the $\lambda$-weighted loss response in every electorate every year. A program whose implementation requires continuously violating the participation constraint of its own electorate blocks itself; the $I$-margin, by contrast, runs on the very scaling mechanism of Proposition~\ref{p:commons}(iv). We engage the school's own framing, in which welfare is redefined off GDP, in Section~\ref{sec:predictions}; the participation constraint does not redefine with it.
\end{corollary}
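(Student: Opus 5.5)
The plan is to formalize the corollary as a short chain of inequalities that feeds the annual $Y$-margin contraction into the gap dynamics of Proposition~\ref{p:jwindow}. The argument has three steps: translate the emission target into a required output path, show that this path keeps the reference gap bounded away from zero, and then show that this bound exceeds the derived threshold $\varepsilon$ in every period.

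\emph{Step one.} Using the compressed Kaya identity $E=Y\cdot I$ of Proposition~\ref{p:commons}(iv) and holding $I$ fixed (the pure $Y$-margin), an annual emissions cut of rate $g_E\approx 0.076$ requires $Y_{t+1}=(1-g_E)Y_t$. I will identify consumption with output up to a constant share, so $c_{t+1}=(1-g)c_t$ with $g\approx g_E$. The 2020 benchmark, a near-shutdown yielding a fall of only about $0.054$ in emissions, fixes the scale: even the observed contraction undershoots the required cut. The required per-period loss is therefore at least of 2020 magnitude, and it recurs every year.

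\emph{Step two.} I then solve the adaptive-reference recursion of A7 along a geometrically declining path. Try $r_t=\rho\,c_t$ and substitute into $r_{t+1}=(1-\gamma)r_t+\gamma c_t$, which gives $\rho(1-g)=(1-\gamma)\rho+\gamma$, hence $\rho=\gamma/(\gamma-g)$ for $g<\gamma$. The gap is then $z_t=c_t-r_t=-\tfrac{g}{\gamma-g}\,c_t<0$. Any initial condition converges to this path, because the deviation contracts at rate $(1-\gamma)/(1-g)<1$. So $z_t$ is permanently negative, which is the contrast with part (iii) of Proposition~\ref{p:jwindow}, where a one-time break lets the gap close. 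In level terms, $|z_t|/c_t\to g/(\gamma-g)$, a constant relative gap. I will also note the case $g\ge\gamma$, where $\rho$ is undefined: there the reference never catches up and the gap only grows relative to consumption.

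\emph{Step three, the main obstacle.} The gap must exceed the threshold $\varepsilon=\bar z$ of Proposition~\ref{p:jwindow}(i), but $|z_t|$ shrinks in levels as $c_t$ declines. My first move is to express the threshold in units relative to consumption, taking stakes $L$ and costs $\kappa$ as proportional to $c_t$, which is natural for a homogeneous value function. Under that scaling the comparison is stationary: $g/(\gamma-g)$ against $\bar z/c$. With $g\approx 0.076$ and $\gamma\in[0.2,0.4]$, the relative gap lies between about $0.23$ and $0.61$. This is large relative to any per-period perceptibility level calibrated from 2020-scale breaks, so the gap exceeds $\varepsilon$ in every period. The participation constraint is then violated against the program's electorate every year, which is the content of ``fires the $\lambda$-weighted loss response.''

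I will close by contrasting this with the $I$-margin. Along that margin, $Y$ need not fall and $z_t$ stays at zero, with the cost declines driven by Wright's law in Proposition~\ref{p:commons}(iv). I will also state explicitly that the argument rests on the scale-invariance assumption in step three and on treating consumption as proportional to output.
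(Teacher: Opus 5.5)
Your proposal is correct and follows the paper's outline: the Kaya identity with $I$ fixed, the comparison with 2020, and a permanently negative gap priced at $\lambda$. Its middle step, however, is genuinely different. The paper's proof is a two-line calibration. It notes that $\Delta Y/Y\approx-7.6$ percent a year exceeds the 2020 output contraction, reads the repeated annual cut through the one-time-break dynamics of Proposition~\ref{p:jwindow}(ii), and asserts that $z_t<-\varepsilon$ holds permanently.

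You instead solve the A7 recursion along the geometric decline and obtain the stationary relative gap $z_t/c_t\to-g/(\gamma-g)$. At $\gamma\in[0.2,0.4]$ this is three to eight times the annual cut. It shows that persistence comes from the reference lagging a continuing decline, so in steady state no single year's cut has to clear $\varepsilon$ on its own.

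Your route also exposes an assumption the paper's ``permanently'' leaves unstated. The gap shrinks in levels as $c_t$ falls, while $\bar z$ in Proposition~\ref{p:jwindow}(i) is a level built from fixed $L$, $\kappa$, $G$. Without a scale assumption, a fixed threshold is eventually undercut and the window closes even under continuing decline.

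Two refinements. First, the comparison is literally stationary only if $G$ also scales with $c_t$ (and $\alpha=\beta$). If $G$ is held fixed, the relative threshold falls and your conclusion only strengthens. Second, ``every period'' includes the transient, in which the relative gap starts at one year's cut, $g/(1-g)$, and rises monotonically to $g/(\gamma-g)$. Clearing $\varepsilon$ from the first year therefore requires that a single year's cut of this size already exceeds the threshold. The paper's proof relies on the same premise without deriving it.
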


\begin{figure}[t]
\centering
\includegraphics[width=0.68\textwidth]{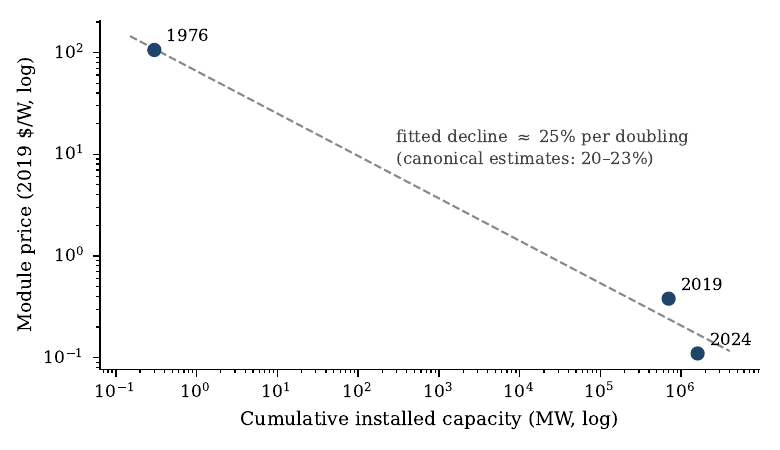}
\caption{\small \textbf{Wright's law at work.} Solar module prices against cumulative installed capacity, log-log. Anchors: \$106/W at 0.3 MW (1976), roughly \$0.38/W (2019) and \$0.11/W (2024). The fitted line implies a decline of roughly 25 percent per doubling; canonical estimates are 20--23 percent (Nemet, 2009; Farmer and Lafond, 2016; IRENA data via Roser, 2020).}
\label{fig:solar}
\end{figure}

\section{The Persistence Theorem}\label{sec:theorem}

\begin{theorem}[Persistence, by assembly]\label{t:main}
Under A1--A10:
\begin{itemize}
\item[(i)] $\Pj[\mathrm{E2}]=0$: no permanent non-price replacement of the price mechanism occurs.
\item[(ii)] $\Pj[\mathrm{E1}]\;\leq\;\sum_{k\in K}\dfrac{q_k^{(1)}}{1-\delta}\;+\;\Pj[\text{novel class}\ \wedge\ \text{adaptation overrun}]$, where the first term is fixed and small and the second is non-increasing in $t$ by A2.
\item[(iii)] The system's collapse probability is therefore a shrinking residual concentrated entirely in the novelty tail, and crisis is not the end of the regime but the input to its adaptation operator.
\end{itemize}
\end{theorem}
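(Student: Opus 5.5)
The proof is an assembly, and I would organize it by collapse type, following Definition~\ref{d:collapse}: E2 (replacement) and E1 (disintegration) are disjoint absorbing events, and each gets its own argument. Part (iii) is then a reading of (ii) together with the adaptation operator $R_{t+1}=A(R_t,\sigma_t,y_t)$.

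For part (i), I would argue that any E2 path must pass through two gates, and under the assumptions at least one gate is closed on every path. The first gate is \emph{reaching} a non-price regime. By Proposition~\ref{p:hotswap} (A9), such a regime can only be reached along a feasible local-modular path, since a rebuild by shutdown is E1 and not E2. A discontinuous political takeover instead requires an uprising. By Proposition~\ref{p:dial} (A10), the elite sets $\theta$ at the binding participation constraint of Propositions~\ref{p:part}--\ref{p:jwindow}, so revolt is off the equilibrium path. Even inside a J-window, Proposition~\ref{p:coord}(ii) under A8 gives the unique no-attack equilibrium except at extreme fundamentals. The second gate is \emph{persisting} in a non-price regime. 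By Proposition~\ref{p:dilemma}, horn H2 binds at all times under A4: without binding budget constraints the planner's objective has no truthful inputs. Propositions~\ref{p:soft} and~\ref{c:goodhart} then give monotone degradation, with effort falling to $e_{\min}$ and queue costs $w^{*}>0$. This drives the state out of $\mathcal{V}$, so the regime either ends in E1 or reinstalls the price module. In both cases the replacement is not permanent. Hence $\Pj[\mathrm{E2}]=0$.

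For part (ii), I would partition E1 by the class of the fatal shock, known classes $k\in K$ versus a novel class, and apply a union bound. For the known classes, Proposition~\ref{p:immune}(ii) gives exactly $\sum_{k\in K}q_k^{(1)}/(1-\delta)$ under A1. The remaining mass is the event that a first hit of a novel class causes collapse before adaptation acts, which I label $\{\text{novel}\wedge\text{overrun}\}$. Its per-hit hazard is at most $\bar q_t$, which is non-increasing in $t$ by A2. Proposition~\ref{p:anti} (A3) ensures that unit failures from a non-fatal shock do not propagate into aggregate loss, so that non-fatal hits genuinely feed $A$ rather than accumulating damage. Proposition~\ref{p:prophecy} supplies the contraction showing that heightened collapse belief triggers the adaptive response rather than a self-fulfilling run.

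For part (iii), I would combine the two parts. E2 has probability zero, the known-class term is a fixed finite sum, and every surviving crisis increments some $N_k$ and thereby lowers $q_k^{(m)}$ by the factor $\delta$. The residual hazard therefore sits only in the novelty tail.

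The main obstacle is making part (i) an honest probability-zero statement rather than a limiting one. Proposition~\ref{p:dial} puts revolt off the equilibrium path only when the dial is operable, and Corollary~\ref{c:jam} concedes jams. I would therefore lean on the second gate, H2 making any non-price regime non-persistent, because that gate does not depend on how the regime was entered. The remaining worry is that "non-persistent" needs a finite-time exit almost surely, not merely a positive exit rate. I would try to get this by showing that, under A4(ii), the degradation in Proposition~\ref{p:soft} is strictly monotone, which gives a deterministic exit from $\mathcal{V}$ in finite time.
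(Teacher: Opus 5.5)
Your two gates are the paper's designed/fought-for split, and you put the weight where the paper does. The paper likewise does not rely on the fought-for conjunction (agent in a window, public signal, jammed dial, feasible path) being null. It gets the zero from the successor's permanent non-price operation re-entering the designed route and failing there. Parts (ii) and (iii) are the same union bound: Proposition~\ref{p:immune} plus the A2 tail.

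Drop your two additions to (ii). The paper's proof uses neither, and the union bound needs neither. Proposition~\ref{p:anti} concerns expected aggregate value under zero-mean shocks, not non-propagation of unit failures. The convergence in Proposition~\ref{p:prophecy} is conditional on $g$ being strictly decreasing with $\sup|g'|<1$. Those hypotheses are not among A1--A10, so the assumptions cannot supply it.

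The step that would fail is your fix for the finite-time worry. Proposition~\ref{p:soft} and Corollary~\ref{c:goodhart} are static comparative statics: effort decreasing in $\pi$, a corner allocation, an equilibrium wait $w^{*}$. The model specifies no law of motion for $x_t$ under a non-price regime, so there is no degradation path whose monotonicity A4(ii) could deliver. Monotonicity would not suffice anyway, since a strictly decreasing sequence can converge to a limit inside $\mathcal{V}$. You would need a uniform per-period decrement, or, stochastically, an exit hazard bounded away from zero, which already gives almost-sure finite exit. Either one is a new assumption.

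The paper does not derive a hitting time at all; it makes E2 empty rather than null. A5 puts a finite staleness horizon $\tau_s$ on the plan's data, and H1 or H2 binds within each such horizon (Proposition~\ref{p:dilemma}). Permanent non-price operation is therefore excluded by the assumptions themselves, and Remark~\ref{r:honest} says the zero is inherited from A4 and A5, no more and no less. State your second gate as that inheritance rather than promising a derived exit time.
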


The proof, in the appendix, is an assembly job, which is the point: each joint is a separate literature's theorem. Replacement from above runs into Proposition~\ref{p:dilemma} and Lemma~\ref{l:prices}, and its interior decays by Corollary~\ref{c:goodhart} and Proposition~\ref{p:soft}. Replacement from below requires the conjunction of an agent (Propositions~\ref{p:part} and~\ref{p:jwindow}), a coordination architecture (Proposition~\ref{p:coord}), an opportunity (Corollary~\ref{c:jam}), and a feasible path (Proposition~\ref{p:hotswap}); and even when the conjunction obtains and a regime falls, the successor's permanent \emph{non-price} operation faces Propositions~\ref{p:dilemma} and~\ref{p:soft} afresh, which is what the four reinstallations of the price module were about. Disintegration is bounded by Proposition~\ref{p:immune} on known classes and by A2 on novel ones.

\begin{remark}[The theorem is exactly as strong as its assumptions, and no stronger in kind]\label{r:honest}
Two honesty clauses. The zero in part (i) is inherited from A4 and A5, no more and no less, and the tail in part (ii) is carried by A2. And the ten assumptions do not live in a single environment: Propositions~\ref{p:immune} and~\ref{p:anti} are hazard and portfolio calculations, Propositions~\ref{p:part} and~\ref{p:jwindow} are single-agent decision theory, Proposition~\ref{p:coord} is a continuum game, and Proposition~\ref{p:dial} is a two-player game with its own timing. Theorem~\ref{t:main} is a conjunction across models that do not nest; it organizes, it does not unify, and it inherits no equilibrium logic of its own. We therefore read it as a routing device that locates disagreement rather than a conclusion that closes it: objections belong at the assumptions, and Table~\ref{tab:rivals} shows where each school's objection in fact sits. The model that would unify the behavioral and political layers, an elite setting the dial under uncertainty about the mass's parameters, is the right next paper, not a paragraph of this one (Remark~\ref{r:tension}).
\end{remark}

\section{Predictions, Falsification, and Rival Positions}\label{sec:predictions}

\subsection{Testable predictions}

\begin{table}[t]
\centering
\small
\begin{tabular}{p{0.7cm} p{6.3cm} p{3.1cm} p{3.6cm}}
\toprule
 & \textbf{Prediction} & \textbf{Source} & \textbf{Evidence to date} \\
\midrule
P1 & Within a shock class, institutional response lag falls with each hit & A1--A2 & Fig.~\ref{fig:record}b: 816 $\to$ 18 $\to$ 2 days \\
P2 & Within a class, realized damage falls with each hit & Prop.~\ref{p:immune} & 1930s depositor losses vs.\ zero in 2023 \\
P3 & Successful mass uprisings cluster in (J-window) $\times$ (public signal) $\times$ (dial jam); grievance plus atomized media yields none & Props.~\ref{p:jwindow},~\ref{p:coord}; Cor.~\ref{c:jam} & 1789, 1917, 1979, 1989 vs.\ 2019--24: both regime changes under scored jams \\
P4 & Socialist regimes survive iff they reinstall the price module & Props.~\ref{p:dilemma},~\ref{p:hotswap} & NEP, Yugoslavia, China 1978, Vietnam 1986: 4/4 \\
P5 & Queue length and black-market premia comove with price-control coverage & Prop.~\ref{p:soft} & Poland 1980s; Venezuela 2016 \\
P6 & Explicit degrowth platforms face a structural vote ceiling & Cor.~\ref{c:degrowth} & no such platform has won a national majority \\
P7 & Redistribution expansions track revolution-threat indicators & Prop.~\ref{p:dial} & Aidt and Jensen (2014); Bismarck \\
\bottomrule
\end{tabular}
\caption{\small Seven predictions, their formal sources, and the current state of the evidence. P3's negative half is the sharpest near-term test: sustained high grievance under fully personalized media should keep producing large sentiment and small squares.}
\label{tab:predictions}
\end{table}

Table~\ref{tab:predictions} collects seven. Three comments. P3 is the one we would most like to see attacked, and for it to be attackable its jam indicator must be committed before outcomes, so here it is. A dial jam is scored when, during an open window, the transfer margin is blocked by at least one of: sovereign default or loss of market access; a binding external program constraining fiscal expansion, in force or entered within the window; or an institutional bar on the transfer instrument, a currency board, an occupation authority, a codified autocratic refusal. Scored on 2019--24, the indicator marks Sri Lanka 2022, Bangladesh 2024, and Lebanon 2020 as jams, and Hong Kong, Chile, Iraq, and France as unjammed; both regime changes in the set occurred under scored jams, none occurred without one, and Lebanon's jam without a regime change records that the condition is necessary, not sufficient (Remark~\ref{r:scope8}). One clean regime change through mass uprising without a scored jam refutes P3. Second, P3's negative half is being run as a natural experiment right now: grievance indices in several rich democracies sit at multi-decade highs, street mobilization is episodically enormous, and conversion into sustained coordination and regime outcomes stays confined to the jammed cases; the model reads this pattern as an information architecture at work, not a paradox. Third, P1 and P2 are the cheapest to test and the hardest to fake, since central banks and resolution authorities timestamp everything.

\subsection{Falsification conditions}

The framework is refuted by any of the following observations. The list is not closed; it is the frontier we can currently name.

\begin{itemize}
\item[F1.] \emph{An adaptation-outrunning shock class}: a class whose hazard does not fall with repetition, or whose first hit overruns the transfer capacity of A2. Feedback-driven climate tipping points are the live candidate; this is the theorem's largest known exposure, and we would rather name it than have it named for us.
\item[F2.] \emph{A preference-revelation technology}: a working mechanism that elicits true preferences and costs without binding budget constraints. This, and not faster hardware, is what an AI-planning research program must produce to defeat H2. We know of no candidate.
\item[F3.] \emph{A durable replacement}: a non-price replacement that runs a century without reinstalling prices and without disintegrating, however it arose (the horizon is fixed in Remark~\ref{r:op}); or any replacement arising in a prosperous society outside a J-window and without a public-signal generator. One clean case of either suffices.
\item[F4.] \emph{Dial capture}: a persistent non-response of $\theta$ to threat indicators, plutocratic lock-in. At capture, the model routes to Corollary~\ref{c:jam} and forecasts instability, so the persistence claim is withdrawn, not defended.
\item[F5.] \emph{The imported soft budget}: generalization of too-big-to-fail until Kornai's condition holds inside the system that once observed it from outside. Resolution regimes that fire management and wipe shareholders are the partial antidote; their erosion is the counter on this dial. The response-lag series of Figure~\ref{fig:record}b is this condition's own counter read from the other side: each faster backstop widens the guaranteed perimeter, so the paper's central evidence for A1 doubles as F5's needle, moving.
\item[F6.] \emph{Sensor erosion}: Proposition~\ref{p:prophecy}'s sign flip requires a continuing supply of credible collapse beliefs. If sustained success erodes the practice that produces the system's critics, the mechanism Lukyanov (2026) formalizes, then $g$ loses its decreasing slope and prophecy inversion decays. A measurable thinning of institutionalized skepticism alongside rising complacency in risk pricing is the counter.
\end{itemize}

\subsection{Where each rival school must attack}

\begin{table}[h]
\centering
\small
\begin{tabular}{p{4.2cm} p{3.3cm} p{6.4cm}}
\toprule
\textbf{Position} & \textbf{Must deny} & \textbf{Sharpest available test} \\
\midrule
Cumulative-crisis Marxism (Shaikh, 2016) & A1--A3 & P1--P2: are lags and damages rising or falling within class? \\
Market socialism (Lange, 1936--37) & scope of Lemma~\ref{l:prices}, Prop.~\ref{p:dilemma} & Lange keeps prices; the debate narrows to whether shadow prices solve H2 \\
Computational planning (Cockshott--Cottrell, 1993) & H1 & granted arguendo; H2 stands, F2 is the test \\
Degrowth (Kallis, 2018; Hickel, 2020) & framing of Cor.~\ref{c:degrowth} & P6, plus any electorate accepting a permanent 2020 \\
Capture thesis (Piketty, 2014) & durability of A10 & F4; inside the model it is maintenance, not refutation \\
\bottomrule
\end{tabular}
\caption{\small The routing table. Each school's disagreement with the theorem reduces to a named assumption and an observable test.}
\label{tab:rivals}
\end{table}

Table~\ref{tab:rivals} is, in our view, the paper's most useful export even for readers who reject its conclusion, because it converts a war of quotations into a schedule of measurements. One entry deserves a sentence of respect. The degrowth school's strongest reply is that welfare, correctly defined, detaches from $Y$, so Corollary~\ref{c:degrowth} prices the wrong object. Our response is empirical, not philosophical: the participation constraint of Proposition~\ref{p:part} is denominated in the electorate's own reference points, not in the analyst's welfare measure, and no electorate to date has voted as if the redefinition had occurred. P6 keeps score.

\subsection{Scope}

Three limits. The theorem is positive, not normative; persistence is compatible with, and in Proposition~\ref{p:status} and Corollary~\ref{c:piketty} coexists with, tensions the model makes no attempt to bless. The assumptions describe a historical-institutional envelope; if A8's information structure or A10's control loop is rewritten by technology or politics, the conclusions move with the envelope, which is what the falsification list is for. And the model's objects, $\mathcal{V}$, $q_k^{(m)}$, $\gamma$, are not directly observed; the predictions of Table~\ref{tab:predictions} are their observable shadows, and the tests live there.

\section{Conclusion}\label{sec:conclusion}

We asked why a system whose death has been scheduled since 1848 keeps missing the appointment, and we answered with a machine of four interlocking parts: a rulebook that metabolizes each crisis into code, at a speed that has fallen from years to a weekend within the best-documented shock class; a communication technology, the price, whose centralized replacement fails on the input horn even after every concession to modern computation; an agent whose participation in revolt is confined to windows that hedonic adaptation closes in roughly $h\log_2(\Delta/\varepsilon)$ years; and a coordination architecture in which the disappearance of public signals parks protest in the unique quiet equilibrium of a global game, with a distribution dial turned, by a threatened elite's own self-interest, exactly far enough to keep the window from mattering.

None of this makes the system just, and the theorem says nothing about deserving. It says the system is stable, states exactly the ten claims that stability rests on, and posts the six ways the claims can fail, with climate feedbacks and the creeping soft budget at the head of the list. The 175-year forecasting record, on this reading, is not a mystery and not a taunt. It is what an adaptive adversary's ledger looks like from the outside: every prediction of its death filed, read, and patched. The most durable contribution of the system's critics has been involuntary, and it is measurable in Figure~\ref{fig:record}. They are its immune system's antigen supply.

\newpage
\appendix

\section{Proofs}\label{app:proofs}

\begin{proof}[Proof of Proposition~\ref{p:immune}]
(i) Iterating A1 gives $q_k^{(m)}\leq\delta^{m-1}q_k^{(1)}$; sum the geometric series. (ii) Collapse is an absorbing event, so ``collapse ever occurs from a known class'' is the countable union $\bigcup_{k}\bigcup_m C_k^{(m)}$ of single-hit collapse events, and $\Pj[\bigcup A]\leq\sum\Pj[A]$ requires no independence. (iii) Since $\sum_m p_k^{(m)}\leq p_k^{(1)}/(1-\delta)<\infty$, the first Borel--Cantelli lemma, which likewise requires no independence, gives $\Pj[\text{infinitely many class-}k\text{ near misses}]=0$.
\end{proof}

\begin{proof}[Proof of Proposition~\ref{p:anti}]
$z\mapsto\max\{z,0\}$ is convex; by A3, $V_i(x+\epsilon)$ is affine in $\epsilon$; the composition of a convex function with an affine map is convex, and sums of convex functions are convex, so $F(x+\epsilon)$ is convex in $\epsilon$. Jensen's inequality gives $\E[F(x+\epsilon)]\geq F(x+\E\epsilon)=F(x)$. Strictness follows from the strict convexity of $\max\{\cdot,0\}$ in any neighborhood straddling a kink that carries positive probability on both sides.
\end{proof}

\begin{proof}[Proof of Proposition~\ref{p:prophecy}]
With $g$ continuous and strictly decreasing on $[0,1]$, $h(q)=g(q)-q$ is continuous and strictly decreasing with $h(0)\geq 0\geq h(1)$; the intermediate value theorem and monotonicity give a unique root $q^{*}$. Under $\sup|g'|<1$ the map is a contraction and Banach's fixed-point theorem delivers convergence; for steeper $g$ the familiar cobweb two-cycle is possible, which is why the contraction condition appears in the statement.
\end{proof}

\begin{proof}[Proof of Proposition~\ref{p:dilemma}]
\emph{H1.} Suppose first the problem is convex, contrary to A5, and deterministic with $N$ variables. Interior-point methods solve LPs in $O(N^{3.5}L)$ arithmetic operations; the best known theoretical bound is $O^{*}(N^{\omega})$ with $\omega\approx 2.37$ (Cohen, Lee, and Song, 2019). At $N=10^{10}$ and $10^{18}$ operations per second these give roughly $10^{17}$ seconds and $10^{5.7}$ seconds (about six days) per solve, respectively. Contingent commodities multiply $N$ by the number of distinguishable states, which grows combinatorially in the number of elementary uncertain events, and each re-solve must complete within the staleness horizon $\tau_{s}$ of A5. Dropping the counterfactual convexity: with indivisibilities and increasing returns the feasible set is nonconvex and the problem contains integer programming, which is NP-complete (Karp, 1972); under $\mathrm{P}\neq\mathrm{NP}$ no algorithm solves all instances in time polynomial in $N$. \emph{H2.} If reports carry no cost, the direct mechanism is a cheap-talk game in which the allocation is monotone in reported need; truthful separation is not incentive compatible, and the babbling equilibrium survives. If reports carry a cost that scales with the allocation, the mechanism has reinvented a price. The general obstructions: in classical environments no Pareto-efficient, individually rational mechanism is dominant-strategy incentive compatible (Hurwicz, 1972); and even bilateral trade admits no mechanism that is Bayesian incentive compatible, interim individually rational, budget balanced, and ex-post efficient (Myerson and Satterthwaite, 1983). By A4(ii), part of the input does not exist as reportable knowledge prior to choice under a binding constraint, so no report-based mechanism has access to it even in principle.
\end{proof}

\begin{proof}[Proof of Corollary~\ref{c:goodhart}]
With payoff $\omega s(e_s)-c(e_s+e_u)$ and true value $u(e_u)$ unrewarded, any interior allocation of effort to $e_u$ is dominated by transferring it to $e_s$; the optimum is the corner $e_u^{*}=0$. The multitask generalization, with risk and complementarities, is Holmström and Milgrom (1991).
\end{proof}

\begin{proof}[Proof of Proposition~\ref{p:soft}]
(i) The producer solves $\max_e R(e)-c(e)-(1-\pi)D(e)$ with $c$ convex increasing and $D$ the bankruptcy loss, decreasing in effort. The first-order condition $c'(e)=R'(e)-(1-\pi)D'(e)$ has $e^{*}$ decreasing in $\pi$ since the marginal deterrence term $-(1-\pi)D'(e)>0$ shrinks as $\pi$ rises; at $\pi=1$ only $R'$ disciplines effort, and where $R$ is administratively fixed, $e^{*}=e_{\min}$. Dewatripont and Maskin (1995) endogenize $\pi$: with sunk project costs, refinancing dominates cancellation ex post, so a soft budget is the center's sequentially rational policy and is anticipated. (ii) With unit demand, valuations above $p^{*}$, and homogeneous time value $v_z$, entry into the queue continues until the marginal entrant's total price $\bar p+v_z w$ equals the market-clearing price $p^{*}$, giving $w^{*}=(p^{*}-\bar p)/v_z$; heterogeneous $v_z$ sorts entrants by time cost (Barzel, 1974).
\end{proof}

\begin{proof}[Proof of Proposition~\ref{p:part}]
Direct application of A6 to the join/abstain gamble. Joining pays $G$ with probability $p$ at delay $\tau$ and costs $\kappa$ now, with $L$ at risk; abstention preserves the status quo at reference. Discounting, probability weighting, and the value function map these into the displayed inequality. That the individual treats $p$ as exogenous follows from $\partial p/\partial(\text{own participation})=O(1/N)$ (Olson, 1965; Tullock, 1971).
\end{proof}

\begin{proof}[Proof of Proposition~\ref{p:jwindow}]
(i) Write the join/abstain comparison at ambient gap $z<0$, taking the failure branch at weight one (conservative, since $w(p)$ is small): abstaining continues the status quo at $v(z)=-\lambda|z|^{\beta}$; joining pays $\kappa$ with stake $L$, reaching $-\lambda(|z|+L+\kappa)^{\beta}$ on failure and, with weight $\beta_p\delta_p^{\tau}w(p)$, the restored-reference prize $G^{\alpha}$ on success. Participation is worthwhile iff $\beta_p\delta_p^{\tau}w(p)G^{\alpha}>\lambda[(|z|+L+\kappa)^{\beta}-|z|^{\beta}]$. The right side is continuous and strictly decreasing in $|z|$ on $(0,\infty)$, since its derivative is $\lambda\beta[(|z|+L+\kappa)^{\beta-1}-|z|^{\beta-1}]<0$ for $\beta<1$; it runs from $\lambda(L+\kappa)^{\beta}$ at $|z|=0$ toward $0$. The adapted-point failure assumption places the constant left side strictly below the right side's value at zero, so the intermediate value theorem and monotonicity deliver a unique crossing $\bar z>0$, with participation exactly for $|z|>\bar z$. Monotone comparative statics in $(\lambda,L,\kappa)$ and in $w(p)G^{\alpha}$ follow from the same monotonicity; restoring the failure branch to weight $1-w(p)$ shifts $\bar z$ continuously and changes nothing qualitative. (ii) The reference dynamic is linear: for $k\geq 0$, $r_{t_0+k}=(\bar c-\Delta)+(1-\gamma)^{k}\Delta$, so $z_{t_0+k}=-(1-\gamma)^{k}\Delta$ and $|z_{t_0}|=\Delta>\varepsilon$ opens the window at the break. (iii) Setting $|z_t|\leq\varepsilon$ and solving for $k$ gives the display; the approximation uses $\ln(1/(1-\gamma))\approx\gamma$ for small $\gamma$. For Corollary~\ref{c:share}: windows generated by a Poisson process of rate $\mu$, each of length $k^{*}$, form an M/G/$\infty$ system whose stationary probability of at least one busy server is $1-e^{-\mu k^{*}}$.
\end{proof}

\begin{proof}[Proof sketch of Proposition~\ref{p:coord}]
(i) For $\theta\in(0,1)$, all-attack and no-attack are both Nash under complete information. (ii) In the switching equilibrium of the global game the marginal agent, observing the cutoff signal, holds (in the small-noise limit with diffuse prior) a uniform belief over the proportion of others attacking; indifference requires $\Pj[\text{success}]=1-\theta^{*}=t$, giving $\theta^{*}=1-t$, and iterated dominance makes the switching profile the unique equilibrium (Carlsson and van Damme, 1993; Morris and Shin, 1998, 2003). (iii) Sufficiently precise public signals concentrate higher-order beliefs and restore multiplicity (Hellwig, 2002; Morris and Shin, 2002; Angeletos and Werning, 2006); the email-game induction of Rubinstein (1989) shows that no finite chain of private confirmations makes the risky action a best reply, and Monderer and Samet (1989) show that the common $p$-belief that would suffice must itself live on a public event. The final claim applies Aumann's characterization: if no event measurable in every partition separates high- from low-discontent states, the meet's cell containing the truth pools them, and the threshold event is common knowledge at no state.
\end{proof}

\begin{proof}[Proof of Proposition~\ref{p:dial}]
Backward induction: if the constraint is slack the elite lowers $\theta$ (redistribution is costly to it); if violated, revolt follows; the optimum binds the constraint. Transience of the threat makes any promise of future $\theta$ inside the window non-credible outside it; transferring formal decision rights (the franchise) converts the promise into a state variable, which is the commitment mechanism of Acemoglu and Robinson (2006).
\end{proof}

\begin{proof}[Proof of Proposition~\ref{p:status}]
(i) Standard: under single crossing the high type's signal is cheaper at the margin, supporting separation; Grafen (1990) establishes the evolutionary version as an ESS. (ii) Rank is a permutation of the population; $\sum_i s(\mathrm{rank}_i)=\sum_{j=1}^n s(j)$ for any permutation.
\end{proof}

\begin{proof}[Proof of Proposition~\ref{p:hotswap}]
Immediate from Definition~\ref{d:collapse} and A9; the Fisher--Orr geometry invoked alongside it is an analogy and is labeled as such in Remark~\ref{r:fisher}.
\end{proof}

\begin{proof}[Proof of Proposition~\ref{p:commons} and Corollary~\ref{c:degrowth}]
(i) Open access equates average product to cost (Gordon, 1954). (iii) is definitional given Lemma~\ref{l:prices}'s scope. (iv) Wright's law is the empirical regularity $\log c = a - b\log Q$ across the technology panel of Nagy et al.\ (2013). For the corollary: sustaining $E$ cuts of 7.6 percent per year on the $Y$-margin with $I$ fixed requires $\Delta Y/Y\approx-7.6$ percent annually; 2020's realized $-5.4$ percent emission fall accompanied the sharpest output contraction of the postwar era, so the required policy holds $z_t<-\varepsilon$ permanently, and Proposition~\ref{p:jwindow}(ii) prices the electoral response at $\lambda$ per unit.
\end{proof}

\begin{proof}[Proof of Theorem~\ref{t:main}]
(i) An E2 event is either designed or fought for. The designed route contradicts Proposition~\ref{p:dilemma}: any non-price mechanism hits H1 or H2 within each staleness horizon, and its interior decays by Corollary~\ref{c:goodhart} and Proposition~\ref{p:soft}. The fought-for route requires the conjunction of Propositions~\ref{p:part} and~\ref{p:jwindow} (an agent, inside a window), Proposition~\ref{p:coord} (a public-signal architecture), Corollary~\ref{c:jam} (a jammed dial; otherwise Proposition~\ref{p:dial} clears the constraint), and Proposition~\ref{p:hotswap} (a feasible path). Even on that measure-theoretically thin conjunction, what follows is regime change; the successor's \emph{permanent non-price operation} re-enters the designed route and fails there; historically, this is what the four reinstallations of the price module recorded. Hence the E2 set is empty under the assumptions. (ii) Known classes contribute at most $\sum_k q_k^{(1)}/(1-\delta)$ by Proposition~\ref{p:immune}; novel classes contribute at most the A2 tail, non-increasing by meta-learning. (iii) is (i) plus (ii) plus Definition~\ref{d:collapse}.
\end{proof}

\section{Data and Figure Construction}\label{app:data}

\textbf{Figure~\ref{fig:record}a.} Contraction durations in months, peak to trough, NBER Business Cycle Dating Committee chronology, all cycles with peaks 1873--2020. \textbf{Figure~\ref{fig:record}b.} Lag definitions: 1930--33, from the failure of the Bank of United States (December 11, 1930), conventionally the start of the systemic phase of the first banking panic, to the national bank holiday of March 6, 1933 (the Emergency Banking Act followed on March 9 and the FDIC in June): 816 days. 2008: Lehman Brothers failure (September 15) to the Emergency Economic Stabilization Act (October 3): 18 days; the Treasury money-market guarantee came on September 19, 4 days. 2023: closure of Silicon Valley Bank (Friday, March 10) to the joint announcement of the systemic-risk exception and the Bank Term Funding Program (Sunday, March 12): 2 days. Severity companion: roughly nine thousand bank suspensions and about \$1.3 billion of cumulative depositor losses in 1930--33, against zero depositor losses in 2023. NBER dates before 1927 are retrospective Burns--Mitchell reconstructions and are not strictly comparable to the modern committee's. \textbf{Table~\ref{tab:record}.} Peak disruption entries: NBER durations; Black Monday, $-22.6$ percent (DJIA, October 19, 1987); Nasdaq composite peak-to-trough 2000--02, approximately $-78$ percent; U.S. real output decline 1929--33, roughly one quarter.

\textbf{Figure~\ref{fig:jwindow}.} Panel (b): $\gamma=0.3$, $\Delta=1$, $\varepsilon=0.15$, giving $k^{*}=\ln(1/0.15)/\ln(1/0.7)\approx 5.3$. Panel (c): $k^{*}=h\log_2(\Delta/\varepsilon)$ with the shaded band $\gamma\in[0.2,0.4]$, equivalently half-lives $h\in[1.4,3.1]$ years, the range implied by adaptation estimates in Di Tella, Haisken-De New, and MacCulloch (2010) and the survey evidence in Clark, Frijters, and Shields (2008). With Proposition~\ref{p:jwindow}(i), $\varepsilon$ is the derived threshold $\bar z$, so $\Delta/\varepsilon$ reads as the ratio of the break to that object.

\textbf{Figure~\ref{fig:gini}.} Gini coefficients of market and disposable income, total population, 2019, OECD Income Distribution Database, as compiled in Mogstad, Salvanes, and Torsvik (2025): United States 0.505/0.395; United Kingdom 0.508/0.366; Finland 0.512/0.273; Denmark 0.445/0.268; Sweden 0.429/0.277; Norway 0.427/0.261.

\textbf{Figure~\ref{fig:solar}.} Module price anchors, 2019 dollars per watt against cumulative installed capacity: \$106 at 0.3 MW (1976); approximately \$0.38 around 700 GW (2019); approximately \$0.11 at 1.6 TW (2024). Sources: Nemet (2009), Farmer and Lafond (2016), IRENA series as presented in Roser (2020) and subsequent updates. The fitted per-doubling decline of about 25 percent brackets the canonical 20--23 percent range.

\textbf{Other quantities.} Emissions: required cut of 7.6 percent per year for the 1.5-degree pathway, UNEP (2019); realized 2020 fall in fossil CO$_2$ emissions, approximately 5.4 percent (Global Carbon Project). Suez blockage: approximately \$9.6 billion per day of trade flow held up (Lloyd's List estimate), a delay figure, not a destruction figure. Gosplan line items: roughly twelve million (Nove, 1983). The binomial bound in Section~\ref{sec:record}: $1-0.05^{1/15}\approx 0.181$.

\section{Notation}\label{app:notation}

\begin{table}[h]
\centering
\small
\begin{tabular}{l p{10.6cm}}
\toprule
\textbf{Symbol} & \textbf{Meaning} \\
\midrule
$E_t=(x_t,R_t)$ & state of the economy; rulebook \\
$\mathcal{V},\ \mathcal{V}_\varepsilon$ & viability set; stressed neighborhood \\
$\Sigma_k,\ N_k(t)$ & shock class $k$; cumulative hits \\
$q_k^{(m)},\ p_k^{(m)}$ & collapse and near-miss hazard of the $m$-th hit \\
$\delta$ & immune-memory decay factor (A1) \\
$A(\cdot)$ & adaptation operator \\
$F(x)=\sum_i\max\{V_i,0\}$ & option-structured aggregate value \\
$v(\cdot),\lambda,\alpha,\beta,w(p)$ & CPT value function, loss aversion, curvatures, probability weight \\
$\beta_p,\ \delta_p$ & quasi-hyperbolic present-bias and discount factors (distinct from the curvatures $\alpha,\beta$) \\
$r_t,\ \gamma,\ z_t=c_t-r_t$ & reference point; adaptation speed; reference-relative gain \\
$h=\ln 2/\ln\!\big(1/(1-\gamma)\big)$ & adaptation half-life \\
$k^{*}$ & J-window length \\
$\varepsilon\equiv\bar z$ & derived perceptibility threshold (Proposition~\ref{p:jwindow}(i)) \\
$\mu$ & arrival rate of qualifying breaks \\
$G, L, \kappa, \tau$ & revolt gain; status-quo stake; participation cost; delay \\
$\theta,\ \theta^{*},\ t$ & regime strength; global-game cutoff; attack cost \\
$N,\ \tau_{s}$ & planning problem size; data staleness horizon (distinct from the participation delay $\tau$) \\
$\pi,\ w^{*}$ & bailout probability; equilibrium queue wait \\
\bottomrule
\end{tabular}
\caption{\small Notation.}
\end{table}

\end{document}